\tikzstyle{box}=[shape=rectangle, text height=1.5ex, text depth=0.25ex, yshift=0.5mm, fill=white, draw=black, minimum height=5mm, yshift=-0.5mm, minimum width=5mm, font={\small}]
\tikzstyle{Z dot}=[inner sep=0mm, minimum size=2mm, shape=circle, draw=black, fill={rgb,255: red,221; green,255; blue,221}]
\tikzstyle{Z phase dot}=[minimum size=5mm, font={\footnotesize\boldmath}, shape=rectangle, rounded corners=2mm, inner sep=0.2mm, outer sep=-2mm, scale=0.8, tikzit shape=circle, draw=black, fill={rgb,255: red,221; green,255; blue,221}, tikzit draw=blue]
\tikzstyle{X dot}=[Z dot, shape=circle, draw=black, fill={rgb,255: red,255; green,136; blue,136}]
\tikzstyle{X phase dot}=[Z phase dot, tikzit shape=circle, tikzit draw=blue, fill={rgb,255: red,255; green,136; blue,136}, font={\footnotesize\boldmath}]
\tikzstyle{hadamard}=[fill=yellow, draw=black, shape=rectangle, inner sep=0.6mm, minimum height=1.5mm, minimum width=1.5mm]
\tikzstyle{vertex}=[inner sep=0mm, minimum size=1mm, shape=circle, draw=black, fill=black]
\tikzstyle{vertex set}=[inner sep=0mm, minimum size=1mm, shape=circle, draw=black, fill=white, font={\footnotesize\boldmath}]
\tikzstyle{hadamard edge}=[-, dashed, dash pattern=on 2pt off 0.5pt, thick, draw={rgb,255: red,68; green,136; blue,255}]
\tikzstyle{brace edge}=[-, tikzit draw=blue, decorate, decoration={brace,amplitude=1mm,raise=-1mm}]
\tikzstyle{diredge}=[->]
\tikzstyle{dashed edge}=[-, dashed]
\tikzstyle{arrow}=[->, thick]
\newtheorem{theorem}{Theorem}
\newtheorem{definition}{Definition}
\newtheorem{assumption}{Assumption}
\newtheorem{corollary}{Corollary}
\begin{document}

\title{Analyzing the barren plateau phenomenon in training quantum neural networks with the ZX-calculus}
\author[1,2]{Chen Zhao}
\email{zhaochen17@mails.ucas.ac.cn}
\author[1,2]{Xiao-Shan Gao}
\email{xgao@mmrc.iss.ac.cn}
\affil[1]{Academy of Mathematics and Systems Science, Chinese Academy of Sciences}
\affil[2]{University of Chinese Academy of Sciences}

\maketitle

\begin{abstract}
In this paper, we propose a general scheme to analyze the gradient vanishing phenomenon, also known as the barren plateau phenomenon, in training quantum neural networks with the ZX-calculus. More precisely, we extend the barren plateaus theorem from unitary 2-design circuits to any parameterized quantum circuits under certain reasonable assumptions. The main technical contribution of this paper is representing certain integrations as ZX-diagrams and computing them with the ZX-calculus. The method is used to analyze four concrete quantum neural networks with different structures. It is shown that, for the hardware efficient ansatz and the MPS-inspired ansatz, there exist barren plateaus, while for the QCNN ansatz and the tree tensor network ansatz, there exists no barren plateau.
\end{abstract}

\section{Introduction} 
\label{sec:introduction}
In recent years, hybrid quantum-classical algorithms are widely used in quantum chemistry~\cite{peruzzo2014variational,kandala2017hardware,cao2019quantum,bauer2020quantum}, combinatorial optimization~\cite{farhi2014quantum,zhou2018quantum}, and quantum machine learning~\cite{liu2018differentiable,lloyd2018quantum,havlivcek2019supervised,schuld2020circuit,benedetti2019parameterized,zhao2019qdnn}. In these hybrid quantum-classical algorithms, the goal is usually training parameterized quantum circuits (PQCs) with classical optimizers. The PQC will be applied to an initial state and then the state will be measured on a quantum device. The classical optimizer will update the parameters of the PQC according to the measurement results. As the PQC can be run on noisy intermediate-scale quantum (NISQ~\cite{preskill2018quantum}) devices, these algorithms are regarded as near-term practical quantum algorithms with potential quantum advantages.

There exist many methods to train PQCs. Some of these are gradient-based~\cite{schuld2019evaluating,mari2020estimating,stokes2020quantum,kubler2020adaptive} and some are not~\cite{nakanishi2020sequential,kubler2020adaptive}. In quantum machine learning, gradient-based methods are widely used. When using gradient-based methods to train PQCs, one may suffer from the barren plateau (BP) phenomenon which was first studied in~\cite{mcclean2018barren}. The BP phenomenon is that the gradient of parameters of the PQC will vanish exponentially in terms of the system size. It was proved that if the PQCs form unitary 2-designs, then the BP phenomenon exists~\cite{mcclean2018barren}.
This result has been extended to the case when the PQCs form approximately 2-designs in~\cite{holmes2021connecting}. The BP phenomenon in PQCs of various structures has been proposed. For PQCs with a brick-like structure, if the PQC has locally 2-design, then the existence of BPs depends on the depth of the circuit and the cost-function~\cite{cerezo2020cost}. Let $n$ be the number of qubits of the PQC. For $\mathrm{poly}(n)$-depth PQCs with a brick-like form, there always exist BPs. For $\log(n)$-depth PQCs with a brick-like form, if the cost-function is global, there exist BPs. Otherwise, there exists no BP when the cost-function is local. Too much entanglement will induce BPs~\cite{marrero2020entanglement,patti2020entanglement}. The BP phenomenon in dissipative quantum neural networks has been studied in~\cite{sharma2020trainability}. And the noise from quantum hardware also causes BPs, which are called noise-induced BPs~\cite{wang2020noise}. Several methods to avoid BPs have been proposed~\cite{Grant2019initialization,Volkoff_2021,holmes2021connecting,patti2020entanglement}.

The above results about the BP phenomenon are obtained under certain assumptions of unitary $2$-design and it is still difficult to analyze the BP phenomenon for PQCs besides those containing $t$-design parts.
In this paper, we develop a general scheme to analyze whether there exist BP phenomena when training a concrete PQC. We focus on BP phenomena induced by the structure of PQCs and noise-induced BPs are not considered in this paper.
The most important tool used in this paper is the \textit{ZX-calculus}, a graphical language for describing and reasoning about quantum processes. The ZX-calculus was developed by Coecke and  Duncan in~\cite{coecke2008interacting,coecke2011interacting}, which has various applications including quantum circuit synthesis~\cite{duncan2020graph,kissinger2019reducing,cowtan2019phase,hanks2020effective}, measurement-based quantum computing~\cite{DuncanMBQC,Backens2020extraction}, quantum error correction~\cite{chancellor2016coherent,horsman2017surgery}, condensed matter physics~\cite{east2020akltstates}, quantum machine learning~\cite{yeung2020diagrammatic}, and quantum natural language processing~\cite{coecke2020foundations}. In the ZX-calculus, the objects under consideration are ZX-diagrams, which consist of two kinds of tensors: Z-spiders and X-spiders. A ZX-diagram can be rewritten with ZX-calculus rules. Moreover, every quantum circuit can be converted into a ZX-diagram.

Let $\vec{\theta} = (\theta_1,\dots,\theta_m)$ be a set of parameters.
To analyze the gradient of a PQC $U(\vec{\theta})$  with respect to a  Hamiltonian $H$, we need to estimate the following expectation and the variance
\begin{equation}
 \label{var_grad}
    \mathbf{E}\left[ \frac{\partial \braket{H}}{\partial \theta_j} \right],\
  \mathbf{Var}\left[\frac{\partial\braket{H}}{\partial \theta_j}\right]
\end{equation}
where $\braket{H}$ is defined in Eq.~\eqref{eq-brH}.
It will be shown that the expectation in Eq.~\eqref{var_grad} is always zero.
The PQC is said to have {\em barren plateaus} if the variance in Eq.~\eqref{var_grad}
vanishes exponentially in terms of the size of the PQC.
The PQC is said to have {\em no barren plateau} or {\em trainable}
if the variance in Eq.~\eqref{var_grad} vanishes polynomially in terms of the size of the PQC.

To estimate the expectation and variance in Eq.~\eqref{var_grad},
we first represent them as ZX-diagrams.
Since the expectation and the variance are integrations, the main technical contribution of this paper is representing these integrations as ZX-diagrams and computing them with the ZX-calculus when the PQC satisfies Assumption~\ref{assumption-PQC}.
More precisely, with the rewriting rules in the ZX-calculus, we prove that Eq.~\eqref{var_grad} is equal to the contraction of a tensor network with a similar structure as the PQC. Hence, the existence of BPs is totally characterized by the scaling property of the tensor network.

We use these techniques to analyze whether there exist BP phenomena in the hardware-efficient ansatz~\cite{kandala2017hardware}, the QCNN ansatz~\cite{cong2019quantum}, the tree tensor network ansatz~\cite{grant2018hierarchical}, and the MPS-inspired ansatz~\cite{PhysRevResearch.1.023025}. We show that there exist BPs in hardware-efficient ansatz and MPS-inspired ansatz, and there exists no BP in the QCNN ansatz and the tree tensor network ansatz.

This paper is organized as follows. A brief introduction to the PQC, the BP phenomenon, and the ZX-calculus will be given in Section~\ref{sec:preliminary}. We will prove the main result that characterizes Eq.~\eqref{var_grad} in Section~\ref{sec:analyzing_the_bp_phenomenon_with_the_zx_calculus}. And the analysis of four concrete PQCs is given in Section~\ref{sec:examples}.

\section{Preliminary} 
\label{sec:preliminary}
\subsection{Hybrid quantum-classical algorithms} 
\label{sub:hybrid_quantum_classical_algorithms}
In a hybrid quantum-classical algorithm, there will be an ansatz, which is a   PQC of the form
\begin{equation}
    U(\vec{\theta}) = \prod_{j=1}^M[U_j(\theta_j)\cdot V_j].
\label{eq-gao1}
\end{equation}
In (\ref{eq-gao1}), $U_j(\theta_j), j=1,\dots,M$ are parameterized gates, such as the rotational gates $R_X,R_Y,R_Z$; and $V_j$ are non-parameterized gates, such as the Hadamard gate $H$ and the CNOT gate. The PQC will be applied to an initial state $\rho_0$ and then the state will be measured. The above procedure, which is  the quantum part of the algorithm, will be run on quantum processors.
Meanwhile, there will be a classical part that consists of classical processors to optimize the parameters of the PQC in the quantum part. A cost-function $L(\vec{\theta})$ will be estimated in the classical part based on the measurement results. Usually, the expectation
\begin{equation}
\label{eq-brH}
\braket{H} = \mathrm{Tr}\left( \rho_0 U^\dagger(\vec{\theta})HU(\vec{\theta}) \right)
\end{equation}
of a given Hamiltonian $H$ will be regarded as the cost-function in many tasks.

\begin{figure}[h]
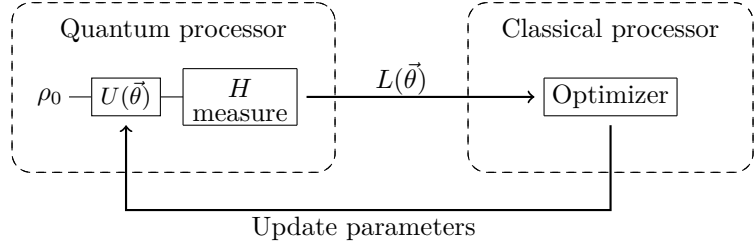

  \centering
  \tikzfig{hybrid}
  \caption{The hybrid quantum-classical algorithm}
  \label{fig:hybrid}
\end{figure}

As demonstrated in Figure~\ref{fig:hybrid}, the quantum part runs the PQC and obtains the measurement results and the classical part estimates the cost-function and updates the parameters. After several iterations, the cost-function may converge and be optimized. Then the training will be stopped. This is the main idea of the hybrid quantum-classical algorithm.

\subsection{Barren plateau phenomenon} 
\label{sub:barren_plateau_phenomenon}
When the parameterized gates are of the form \[
  U_j(\theta_j) = e^{-i \frac{\theta}{2} H_j},
\] where $H_j$ satisfies $H_j^2 = I$, the gradient $\frac{\partial \braket{H}}{\partial \theta_j}$ can be estimated by the parameter shifting rule without changing the structure of the PQC~\cite{schuld2019evaluating}. Once we obtain the gradient, we can use gradient-based optimization methods, such as gradient descent, to optimize the parameters.

Ideally, if the gradient does not vanish too fast as the size of the PQC grows, then the gradient could be estimated efficiently and the PQC could be trained easily. However, the BP phenomenon tells us that in many cases, the gradient vanishes exponentially as the system size grows up. When this happens, the PQC is difficult to be trained. The first rigorous proof of the BP phenomenon is shown below.

\begin{theorem}[\cite{mcclean2018barren}]
  Consider a PQC $U(\vec{\theta}) = V(\theta_M,\dots,\theta_{j+1})U(\theta_j)W(\theta_{j-1},\dots,\theta_{1})$ and a Hamiltonian $H$. The expectation of gradient is 0 if $V$ and $W$ are 1-design. And the variance of gradient $\mathbf{Var}\left[\frac{\partial\braket{H}}{\partial \theta_j}\right]$ vanishes exponentially in terms of the number of qubits if $V$ or $W$ is 2-design.
\end{theorem}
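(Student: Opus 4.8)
The plan is to turn both statistics into Haar averages of a single explicit gradient formula and then evaluate those averages using only the defining moment conditions of unitary $1$- and $2$-designs. First I would differentiate explicitly: writing the middle gate as $U(\theta_j)=e^{-i\theta_j H_j/2}$ with $H_j^2=I$ and using $\partial_{\theta_j}U(\theta_j)=-\tfrac{i}{2}H_jU(\theta_j)$ together with the cyclicity of the trace, one gets
\begin{equation}
\label{eq-bp-grad}
\frac{\partial\braket{H}}{\partial\theta_j} = \frac{i}{2}\,\mathrm{Tr}\big(\tilde\rho\,[H_j,A]\big),
\end{equation}
where $\tilde\rho:=U(\theta_j)\,W\rho_0W^\dagger\,U(\theta_j)^\dagger$ is the initial state propagated through the first $j$ gates and $A:=V^\dagger H V$ is the observable pulled back through $V$; the right-hand side is real because $[H_j,A]$ is anti-Hermitian while $\tilde\rho$ is Hermitian. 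Everything below is an average of this scalar, or of its square, over the circuit ensemble, in which (by hypothesis) the block $V$ or the block $W$ is distributed as a unitary $1$- or $2$-design, independently of the rest of the circuit.

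For the expectation I would integrate over whichever block is a $1$-design. If $W$ is a $1$-design then $\mathbf{E}_W[W\rho_0W^\dagger]=I/2^n$, hence $\mathbf{E}_W[\tilde\rho]=I/2^n$ and $\mathbf{E}[\partial_{\theta_j}\braket{H}]=\tfrac{i}{2^{n+1}}\,\mathrm{Tr}[H_j,A]=0$ because the trace of a commutator vanishes; symmetrically, if $V$ is a $1$-design then $\mathbf{E}_V[A]$ is a multiple of the identity, so $[H_j,\mathbf{E}_V A]=0$ and the expectation is again $0$ (so it already suffices that \emph{either} block is a $1$-design). Since the mean vanishes, $\mathbf{Var}[\partial_{\theta_j}\braket{H}]=\mathbf{E}[(\partial_{\theta_j}\braket{H})^2]$, and squaring \eqref{eq-bp-grad} and rewriting the product of two scalar traces as one trace on the doubled space $\mathcal H\otimes\mathcal H$ gives
\begin{equation}
\label{eq-bp-var2}
\Big(\frac{\partial\braket{H}}{\partial\theta_j}\Big)^{2} = -\frac14\,\mathrm{Tr}\Big((\tilde\rho\otimes\tilde\rho)\,\big([H_j,A]\otimes[H_j,A]\big)\Big).
\end{equation}

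Next I would apply the second-moment formula $\int dU\,U^{\otimes2}\,X\,(U^\dagger)^{\otimes2}=\alpha(X)\,I+\beta(X)\,\mathbb S$, where $\mathbb S$ is the swap on $\mathcal H\otimes\mathcal H$ and $\alpha,\beta$ are the standard coefficients built from $\mathrm{Tr}X$, $\mathrm{Tr}(\mathbb S X)$ and a $1/(2^{2n}-1)$ prefactor, to whichever of $V,W$ is a $2$-design. In both cases the $I$-component of the average drops out---if $W$ is the design because $\mathrm{Tr}[H_j,A]=0$, and if $V$ is the design because the commutator structure makes the two identity contributions cancel---so only the swap term survives, and $\mathrm{Tr}((X\otimes Y)\mathbb S)=\mathrm{Tr}(XY)$ collapses \eqref{eq-bp-var2} back to a single-copy trace. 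When $W$ is the $2$-design one gets
\begin{equation}
\label{eq-bp-W}
\mathbf{Var}\left[\frac{\partial\braket{H}}{\partial\theta_j}\right] = \frac{\mathrm{Tr}(\rho_0^2)-2^{-n}}{4\,(2^{2n}-1)}\;\big\|[H_j,A]\big\|_F^2,
\end{equation}
and when $V$ is the $2$-design a parallel computation (again using $H_j^2=I$) gives
\begin{equation}
\label{eq-bp-V}
\mathbf{Var}\left[\frac{\partial\braket{H}}{\partial\theta_j}\right] = \frac{\mathrm{Tr}(H^2)-2^{-n}(\mathrm{Tr}H)^2}{4\,(2^{2n}-1)}\;\big\|\tilde\rho-H_j\tilde\rho H_j\big\|_F^2.
\end{equation}
To conclude I would bound crudely: in \eqref{eq-bp-W} the prefactor is $O(2^{-2n})$ while $\|[H_j,A]\|_F^2\le 4\|H_j\|_F^2\|H\|_\infty^2=O(2^n)$ (using $H_j^2=I$, so $\|H_j\|_F^2=2^n$, and that $\|H\|_\infty$ is bounded), and in \eqref{eq-bp-V} one has $\mathrm{Tr}(H^2)=O(2^n)$ and $\|\tilde\rho-H_j\tilde\rho H_j\|_F^2\le 4\,\mathrm{Tr}(\rho_0^2)=O(1)$; either way $\mathbf{Var}[\partial_{\theta_j}\braket{H}]=O(2^{-n})$, i.e.\ it vanishes exponentially in the number of qubits.

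I expect the main obstacle to be this second-moment step: pinning down the exact $2$-design coefficients $\alpha,\beta$, and---more substantively---checking that the argument still runs when only \emph{one} of $V,W$ is a design while the other block (together with the un-averaged parameter $\theta_j$, which survives inside $\tilde\rho$) is completely arbitrary. One must verify that the $I$-component genuinely vanishes in \emph{both} cases, and that the factors appearing in \eqref{eq-bp-W}--\eqref{eq-bp-V} carry no hidden dependence on the arbitrary block that could undo the exponential decay.
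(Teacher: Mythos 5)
This theorem is quoted from McClean et al.~\cite{mcclean2018barren} as background; the paper itself gives no proof of it (its own ZX-calculus machinery is developed for the different setting of Assumption~1, without any design hypothesis). Your argument is the standard first- and second-moment (1-/2-design) twirling computation used in that reference, and it is correct: the identity components do cancel as you claim (via $\mathrm{Tr}[H_j,A]=0$ when $W$ is twirled, and via the four-term commutator expansion together with $H_j^2=I$ when $V$ is twirled), the remaining dependence on the un-averaged block is killed by your uniform bounds $\|[H_j,A]\|_F^2\le 4\cdot 2^n\|H\|_\infty^2$ and $\|\tilde\rho-H_j\tilde\rho H_j\|_F^2\le 4$, and the swap-term coefficients you quote are the right ones. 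The only point worth making explicit is the implicit normalization hypothesis $\|H\|_\infty=O(1)$ (equivalently $\mathrm{Tr}(H^2)=O(2^n)$, as for a Pauli-string observable), without which ``vanishes exponentially'' is not literally forced; this is also implicit in the informal statement being proved.
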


Hence, when designing the ansatz PQC for a hybrid quantum-classical algorithm, we should analyze whether there exist BP phenomena in it to ensure that it is trainable.

\subsection{The ZX-calculus} 
\label{sub:zx_calculus}
We provide a brief introduction to the ZX-calculus. For more details, please refer to~\cite{CKbook,vandewetering2020zxcalculus}.

In the ZX-calculus, quantum states and their transformations are represented as ZX-diagrams which consist of two kinds of tensors: \textit{Z-spiders} and \textit{X-spiders}. A Z-spider is denoted as a green node, and an X-spider is denoted as a red node. They can be written explicitly in the Dirac notation as follows.
\begin{equation}
    \tikzfig{spiders}
    \label{eq:def-spider}
\end{equation}
For a spider, the edges on the left-hand side are called \textit{input} and the edges on the right-hand side are called \textit{output}. The angle $\theta$ is called the \textit{phase} of the spider. For simplicity, we will omit the phase when it is zero. Spiders can be connected with wires. Hence, ZX-diagrams can be regarded as tensor networks generated with Z-spiders and X-spiders. For example, we can use ZX-diagrams to represent the following quantum states and quantum gates.
\begin{equation}
  \tikzfig{gates}
  \label{eq:gates_to_zx}
\end{equation}
Here we introduce a new notation, the yellow box, to represent the Hadamard gate \[
  H = \frac{1}{\sqrt{2}}\begin{pmatrix}
    1 & 1 \\ 1 & -1
  \end{pmatrix}.
\]
Since the gates set $\{R_Z, R_X, H, \mathrm{CNOT}\}$ is universal for quantum computing, in principle, one can convert every quantum circuit to a ZX-diagram with the equations in Eq.~\eqref{eq:gates_to_zx}.

Moreover, the ZX-calculus is a powerful tool for reasoning. There are several rewriting rules in the ZX-calculus with which one can rewrite a ZX-diagram to another equivalent form. Figure~\ref{fig:zx_rules} gives some basic rewriting rules\footnote{\FusionRule is for the ``fusion'' rule; \HadamardRule is for the ``Hadamard color changing'' rule;
\IdRule and \IdIIRule are ``identity'' rules; \PiRule is for the ``$\pi$-copy'' rule; \CopyRule is for the ``copy'' rule; \BialgRule is for the ``bi-algebra'' rule.} in the ZX-calculus. Here, two ZX-diagrams $A,B$ are said to be equivalent if and only if there exists a non-zero constant $c\in \mathbb{C}$, such that {$[A] = c\cdot [B]$, where $[A]$ is the matrix corresponding to the ZX-diagram $A$.}
\begin{figure}[H]
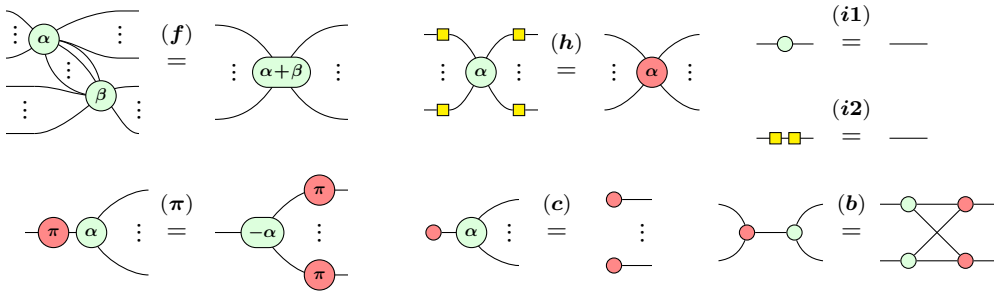

\ctikzfig{ZX-rules}
\caption{Some basic rewriting rules in the ZX-calculus. Here, ``$\dots$'' means 0 or more (figure from~\cite{duncan2020graph}). }
\label{fig:zx_rules}
\end{figure}

Note that the ZX-calculus is \textit{universal}. It means that any linear transformations can be represented as ZX-diagrams. Moreover, the rules in Figure~\ref{fig:zx_rules} are \textit{complete} for the stabilizer quantum mechanics where phases can only be multiples of $\frac{\pi}{2}$~\cite{Backens1,Backens:2015aa}. That is, if two ZX-diagrams are equivalent, then there exists a set of rewriting rules in Figure~\ref{fig:zx_rules} that rewrites one into another. There are also completeness results for the Clifford+T quantum mechanics, where phases can be multiples of $\frac{\pi}{4}$, and for arbitrary ZX-diagrams~\cite{SimonCompleteness,JPV-universal,10.1145/3209108.3209128,wang_completeness_2018,jeandel2019completeness}.

In this paper, we will focus on a canonical form of the ZX-diagram, the \textit{graph-like ZX-diagram} which is defined in~\cite{duncan2020graph}.
\begin{definition}[\cite{duncan2020graph}]
A ZX-diagram is called {\em graph-like} if
  \begin{enumerate}
    \item All spiders are Z-spiders.
    \item Z-spiders are only connected via Hadamard edges.
    \item There exist no parallel Hadamard edges or self-loops.
    \item Every input or output is connected to a Z-spider and every Z-spider is connected to at most one input or output.
  \end{enumerate}
\end{definition}
Two spiders being connected via a Hadamard edge means that they are connected with a Hadamard box. Alternatively, we will also use the dashed blue edge to represent a Hadamard edge.
\ctikzfig{h-edge}
All X-spiders can be rewritten to Z-spiders by using the rule \HadamardRule in Figure~\ref{fig:zx_rules}. Connected Hadamard boxes can be canceled with the rule \HHRule and normal edges can be canceled with the rule \FusionRule. Furthermore, parallel Hadamard edges and self-loops can be canceled with rules\footnote{\ParaHadRule is for the ``Hopf'' rule; \SelfLoopRule is for the ``self-loop canceling'' rule; \HadSelfLoopRule is for the ``Hadamard self-loop canceling'' rule.} in Figure~\ref{fig:edge-cancel}. Hence, every ZX-diagram is equivalent to a graph-like ZX-diagram~\cite{duncan2020graph}.
\begin{figure}[h]
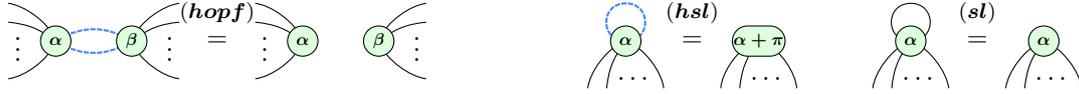

  \ctikzfig{edge-cancel}
  \caption{Rules for canceling parallel edges and self-loops~\cite{duncan2020graph}. }
  \label{fig:edge-cancel}
\end{figure}

\section{Analyzing the BP phenomenon with the ZX-calculus} 
\label{sec:analyzing_the_bp_phenomenon_with_the_zx_calculus}
In this section, we will show how to analyze the BP phenomenon with the ZX-calculus. More precisely, we will show how to estimate the expectation and the variance of the gradient of the cost function of a PQC with respect to a Hamiltonian with the ZX-calculus.
The main technique we used is to compute integration over unitarians with the ZX-calculus.

Scalars are ignored in the rules in Section~\ref{sub:zx_calculus}. However, to consider the BP phenomenon, the scalar is necessary. {Hence, by using the definition of the Z-spider and the X-spider in Eq.~\ref{eq:def-spider},} we can obtain the precise rules with scalars in Figure~\ref{fig:zx_rules_scalar}.
\begin{figure}[h]
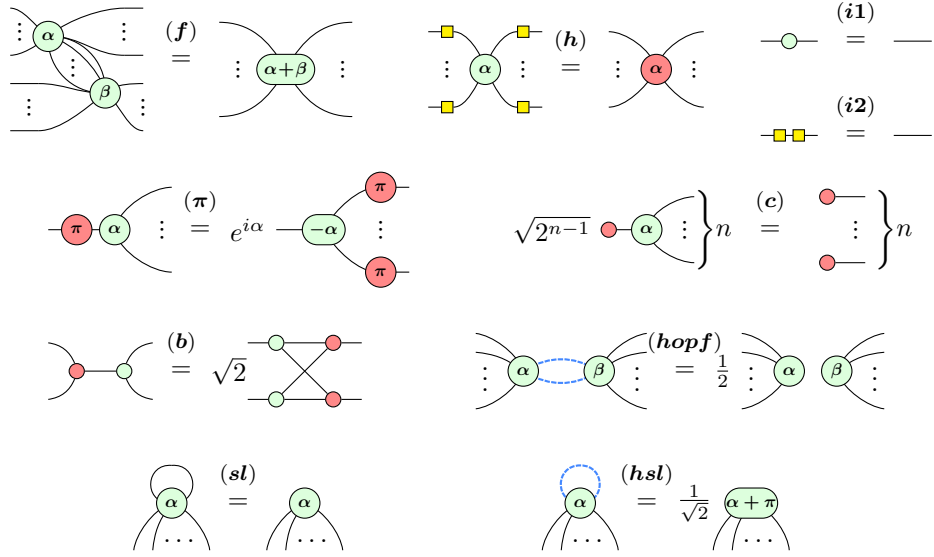

  \ctikzfig{ZX-rules-scalar}
  \caption{Rewriting rules with scalars.}
  \label{fig:zx_rules_scalar}
\end{figure}

In this paper, we consider PQCs under the following assumptions.
\begin{assumption}
    The PQC $U(\vec{\theta})$ satisfies
    \begin{enumerate}
        \item Each gate in $U$ is one of $\{R_X,R_Z,H,\mathrm{CNOT}\} $.
        \item The parameters in $\vec{\theta} = (\theta_1,\dots,\theta_m)$ are independent uniform random variables in the interval $[-\pi,\pi]$.
    \end{enumerate}
    \label{assumption-PQC}
\end{assumption}
{We remark that in the case that a quantum circuit contains gates not satisfying Assumption~\ref{assumption-PQC}, if one can represent these gates as composition of gates which satisfy Assumption~\ref{assumption-PQC}, then the results of this paper still hold. For example, we will first represent $R_Y$  using $R_Z$ and $R_X$ in Section~\ref{sub:tree_tensor_network_ansatz}.}

\subsection{Representing gradients as ZX-diagrams}%
\label{sub:representing_gradients_as_zx_diagrams}

Consider a PQC $U(\vec{\theta})$ of $n$-qubits and a Hamiltonian $H$. {We assume that the input state is pure.} Without loss of generality, we can also assume that we apply this PQC to an initial state $\ket{0}$. Then the expectation of $H$ can be expressed as \begin{equation}
  \braket{H} = \bra{0}U^{\dagger}(\vec{\theta})HU(\vec{\theta})\ket{0}.
\end{equation}

As shown in Section~\ref{sub:zx_calculus}, we can convert the PQC $U(\vec{\theta})$ to a parameterized graph-like ZX-diagram $G_U(\vec{\theta})$ with Eq.~\eqref{eq:gates_to_zx}. Suppose that $$U(\vec{\theta})=c\cdot [G_U(\vec{\theta})]$$
for a constant $c$, then $\braket{H}$ can also be expressed as a ZX-diagram as demonstrated in the following equation. Here, $U(\vec{\theta})$ is under the Assumption~\ref{assumption-PQC}.
\begin{equation}
  \tikzfig{expectation}
  \label{eq:zx-expectation}
\end{equation}

{We remark that in the general case, the input state can be a mixed state $\rho$. Because the ZX-calculus is universal, we can represent $\rho$ as a ZX-diagram $D_{\rho}$. Then by replacing the X-spiders representing zero states on the left- and right-hand sides in Eq.~\eqref{eq:zx-expectation} with $D_{\rho}$, we can still obtain a ZX-diagram representing the expectation $\braket{H}=\mathrm{Tr}(\rho U^{\dagger}HU)$. And results in this paper still hold in this case. }

If we expand the spider by the definition of the Z-spider, we can prove that the gradient $\frac{\partial \braket{H}}{\partial \theta_j}$ can be represented as a ZX-diagram.
\begin{restatable}{theorem}{zxgrad}
  The gradient can be represented as the following equation.
  \ctikzfig{gradient}
  \label{thm:zx-gradient}
\end{restatable}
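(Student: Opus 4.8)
The plan is to differentiate the ZX-diagram of Eq.~\eqref{eq:zx-expectation} spider-by-spider. Recall that $\braket{H}=\bra{0}U^{\dagger}(\vec\theta)HU(\vec\theta)\ket{0}$, and that after the conversion to graph-like form the parameter $\theta_j$ labels exactly two spiders of the diagram: one Z-spider inside the $U$-block with phase $+\theta_j$ (if the underlying gate is $R_Z(\theta_j)$ it is a Z-spider directly; if it is $R_X(\theta_j)$, the \HadamardRule rule has turned the original X-spider into a Z-spider carrying Hadamard edges, so it is again a Z-spider), and the mirror Z-spider inside the $U^{\dagger}$-block with phase $-\theta_j$. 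By Assumption~\ref{assumption-PQC} the $\theta_j$ parameterize pairwise-distinct gates, so no other spider carries $\theta_j$; moreover the prefactor $|c|^2$ coming from $U(\vec\theta)=c\cdot[G_U(\vec\theta)]$ contributes nothing, since the $\theta_j$-dependence of $c$ is a pure phase. Hence, by the Leibniz rule, $\partial_{\theta_j}\braket{H}$ is the sum of two terms, each being the diagram of Eq.~\eqref{eq:zx-expectation} with exactly one of these two spiders differentiated with respect to its phase.

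The crux is therefore a purely local identity: the derivative of a single Z-spider with respect to its phase. Expanding the spider by its Dirac-notation definition in Eq.~\eqref{eq:def-spider}, a phase-$\alpha$ Z-spider equals $\ket{0\cdots 0}\!\bra{0\cdots 0}+e^{i\alpha}\ket{1\cdots 1}\!\bra{1\cdots 1}$, so $\partial_\alpha$ annihilates the $\ket{0}$-branch and multiplies the $\ket{1}$-branch by $i$, giving $i\,e^{i\alpha}\ket{1\cdots 1}\!\bra{1\cdots 1}$. Diagrammatically this is $i$ times the same phase-$\alpha$ Z-spider carrying one extra leg whose end is capped by a phase-$\pi$ spider of the opposite colour (the cap realises $\bra{1}$, which projects the spider onto its $\ket{1}$-branch); after the graph-like normalisation this cap is a fixed small gadget — a $\pi$-spider, possibly joined by a Hadamard edge, depending on whether the differentiated gate was $R_Z$ or $R_X$. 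Applying this identity to the $+\theta_j$-spider yields one term with scalar $i=e^{i\pi/2}$, and applying it to the $-\theta_j$-spider yields the other term with scalar $-i$, the extra sign being absorbable into a $\pi$-phase. Since ZX-rewriting is local, substituting these back into Eq.~\eqref{eq:zx-expectation} and collecting the scalars (tracked via the rules of Figure~\ref{fig:zx_rules_scalar}) produces exactly the diagram in the statement; any remaining massaging into the displayed form is routine use of the \SpiderRule, \CopyRule, \PiRule and \HadamardRule rules of Figure~\ref{fig:zx_rules}.

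The genuinely delicate part is not conceptual but bookkeeping. First, one must keep every scalar: the global phases $e^{-i\theta_\ell/2}$ absorbed when $R_Z$/$R_X$ gates became bare spiders, the $i$ produced by differentiation, and the normalisation constants in \HadamardRule and \BialgRule, all of which matter for the later variance estimates and so must match the figure exactly. Second, the $R_X$ case needs the observation that conjugating the differentiated-spider gadget by the Hadamard edges of the graph-like form sends the opposite-colour $\pi$-cap to the form shown — this is nothing more than \HadamardRule, but is easy to get wrong by a Hadamard. Third, one should confirm the Leibniz step really gives only two terms, i.e. that the graph-like conversion has not fused the $\theta_j$-spider with a neighbour or merged the two occurrences of $\theta_j$ — it cannot, since a parameterised $R_Z$/$R_X$ spider meets its neighbours only through Hadamard edges, and the $+\theta_j$ and $-\theta_j$ spiders are separated by $H$ and the remaining gates of $U$. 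None of these obstacles is deep; the argument is essentially ``product rule plus the derivative of one spider,'' with the work lying in presenting it as the exact ZX equation claimed.
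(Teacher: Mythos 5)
Your overall route is essentially a reorganization of the paper's own computation: the paper expands the two $\theta_j$-spiders (phase $+\theta_j$ in $G_U$, phase $-\theta_j$ in its adjoint) jointly by the definition in Eq.~\eqref{eq:def-spider}, gets four terms of which two are $\theta_j$-independent, differentiates, and reassembles the two surviving terms into the stated diagram via the definition of the X-spider; you reach the same two surviving contributions via the product rule together with the correct local identity $\partial_\alpha\bigl(\ket{0\cdots0}\bra{0\cdots0}+e^{i\alpha}\ket{1\cdots1}\bra{1\cdots1}\bigr)=i\,e^{i\alpha}\ket{1\cdots1}\bra{1\cdots1}$. Your side remarks (the two occurrences of $\theta_j$ cannot have fused, $|c|^2$ is unaffected, scalars must be tracked) are all fine.

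The gap sits exactly at the step that \emph{is} the theorem. After the Leibniz step you hold a \emph{sum of two ZX-diagrams}, each equal to the diagram of Eq.~\eqref{eq:zx-expectation} with one of the two $\theta_j$-spiders projected onto its $\ket{1}$-branch and multiplied by $\pm i$, whereas the figure in the statement is a \emph{single} diagram in which the two $\theta_j$-spiders are joined through a $\pi$-phase X-spider (up to the scalar shown). You dismiss the passage between the two as ``routine massaging'' by \SpiderRule, \CopyRule, \PiRule and \HadamardRule, but those rules are equalities between individual diagrams and cannot merge a sum of two distinct diagrams into one; as written, the proof stops one step short and points at tools that cannot finish it. To close it you must return to the definitions: writing $P_b=\ket{b\cdots b}\bra{b\cdots b}$ and using multilinearity of the diagram in the two spider tensors, your two terms are $i e^{i\theta_j}P_1\otimes\bigl(P_0+e^{-i\theta_j}P_1\bigr)$ and $\bigl(P_0+e^{i\theta_j}P_1\bigr)\otimes\bigl(-i e^{-i\theta_j}P_1\bigr)$; the cross terms $\pm i\,P_1\otimes P_1$ cancel, leaving $i e^{i\theta_j}P_1\otimes P_0-i e^{-i\theta_j}P_0\otimes P_1$, and this antisymmetric combination is precisely what the two extra legs contracted through the $\pi$-phase X-spider encode (with $\pm i=e^{\pm i\pi/2}$ accounting for the scalar/phase in the figure), i.e.\ the X-spider definition read backwards. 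This cancellation is exactly what the paper's four-term expansion performs in one stroke; once you make it explicit, your argument is complete.
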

\begin{proof}
{
We expand the corresponding Z-spiders according to the definition. There will be four terms on the left-hand side. Two of these terms are constants, and thus they will become $0$ after taking the derivative. According to the definition of the X-spider, we can obtain the ZX-diagram on the right-hand side.
The complete proof is given in Appendix~\ref{sec:proof_of_theorem_2}.
}
\end{proof}

{This theorem also gives a graphical proof of the parameter-shift rule in~\cite{schuld2019evaluating}. We will demonstrate it with the following example. Consider the following ansatz. We first can convert it to an equivalent ZX-diagram.}
\[
    \scalebox{1}{\tikzfig{para-shift-circuit}}
\]
{And the expectation $\braket{H}$ of a Hamiltonian $H$ can be represented as the following ZX-diagram.}
\[
    \scalebox{1}{\tikzfig{para-shift-expectation}}
\]
{Then by Theorem~\ref{thmt@@zxgrad}, we can obtain the gradient directly.}
\[
    \scalebox{1}{\tikzfig{para-shift-gradient}}
\]
{Then we can use the definition of the X-spider to obtain the parameter-shift rule as shown below.}
\[
    \scalebox{1}{\tikzfig{para-shift-gradient-2}}
\]
{Here $\braket{H}_{\theta_1,\pm}$ means replacing the parameter $\theta_1$ with $\theta_1\pm \frac{\pi}{2}$ in the ansatz.
}

To analyze the BP phenomenon, we need to compute the expectation \begin{equation}
    \mathbf{E}(\frac{\partial \braket{H}}{\partial \theta_j}) = \int_{\vec{\theta}}p(\vec{\theta})\frac{\partial \braket{H}}{\partial \theta_j} \mathrm{d}\vec{\theta},
    \label{eq:gradient-expectation}
\end{equation} and the variance \begin{equation}
  \mathbf{Var}(\frac{\partial \braket{H}}{\partial \theta_j}) = \mathbf{E}(\left|\frac{\partial \braket{H}}{\partial \theta_j}\right|^2) - \left(\mathbf{E}(\frac{\partial \braket{H}}{\partial \theta_j})\right)^2 = \int_{\vec{\theta}}p(\vec{\theta})\left|\frac{\partial \braket{H}}{\partial \theta_j}\right|^2 \mathrm{d}\vec{\theta} - \left(\mathbf{E}(\frac{\partial \braket{H}}{\partial \theta_j})\right)^2,
  \label{eq:gradient-variance}
\end{equation} for $j=1,\dots,m$.
Here $p(\vec{\theta})$ is the probability of the parameters $\vec{\theta}$.

By Assumption~\ref{assumption-PQC}, Eq.~\eqref{eq:gradient-expectation} and Eq.~\eqref{eq:gradient-variance} can be written as
\begin{equation}
  \mathbf{E}(\frac{\partial \braket{H}}{\partial \theta_j}) = \frac{1}{(2\pi)^m}\int_{\theta_1}\dots\int_{\theta_m}\frac{\partial \braket{H}}{\partial \theta_j} \mathrm{d}\theta_1\dots\mathrm{d}\theta_m,
  \label{eq:gradient-expectation-id}
\end{equation} and
\begin{equation}
  \mathbf{Var}(\frac{\partial \braket{H}}{\partial \theta_j}) = \frac{1}{(2\pi)^m}\int_{\theta_1}\dots\int_{\theta_m}\left|\frac{\partial \braket{H}}{\partial \theta_j}\right|^2 \mathrm{d}\theta_1\dots\mathrm{d}\theta_m - \left(\mathbf{E}(\frac{\partial \braket{H}}{\partial \theta_j})\right)^2,
  \label{eq:gradient-variance-id}
\end{equation} for $j=1,\dots,m$.

We will compute the expectation and variance of the gradients in the next two sections.

\subsection{The expectation of gradients} 
\label{sub:the_expectation_value_of_gradients}

In this section, we will compute the expectation in Eq.~\eqref{eq:gradient-expectation-id}.
As shown in Theorem~\ref{thm:zx-gradient}, the integration \[
  \frac{1}{2\pi}\int_{\theta_k}\frac{\partial \braket{H}}{\partial \theta_j}\mathrm{d}\theta_k,
\] for $k=1,\dots,m$, is also an integration of a ZX-diagram over its parameter $\theta_k$. With the following lemma, the integration can be represented as a ZX-diagram again.

\begin{restatable}{lemma}{intone}
  The following equation holds.
  \ctikzfig{integration-1}
  \label{thm:integration-1}
\end{restatable}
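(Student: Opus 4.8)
The plan is to prove Lemma~\ref{thm:integration-1} by direct computation, expanding the parameterized Z-spider $e^{-i\theta_k Z/2}$ (which is what an $R_Z(\theta_k)$ gate becomes in a graph-like ZX-diagram) according to the definition of the Z-spider in Eq.~\eqref{eq:def-spider}, and then carrying out the integral $\frac{1}{2\pi}\int_{-\pi}^{\pi}(\cdot)\,\mathrm{d}\theta_k$ entrywise. First I would recall that a Z-spider with phase $\theta$ acts as $\mathrm{diag}(1, e^{i\theta})$ on the relevant tensor leg, so in the diagram on the left-hand side the parameter $\theta_k$ appears only through a factor of $1$ on one branch and $e^{i\theta_k}$ on the other. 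Since the gradient diagram from Theorem~\ref{thm:zx-gradient} contains the parameter $\theta_k$ in \emph{two} places (once from $U$ and once from $U^\dagger$, i.e. with phases $+\theta_k$ and $-\theta_k$), the integrand, when the two spiders are expanded, is a sum of four terms with $\theta_k$-dependence $1$, $e^{i\theta_k}$, $e^{-i\theta_k}$, and $e^{i\theta_k}e^{-i\theta_k}=1$ respectively. Integrating over $[-\pi,\pi]$ and dividing by $2\pi$ kills the two cross terms ($\int e^{\pm i\theta_k}\,\mathrm{d}\theta_k = 0$) and leaves the two ``diagonal'' terms, each with weight $1$; the claim is that this surviving sum is exactly the ZX-diagram on the right-hand side, in which the two $\theta_k$-spiders have been replaced by the appropriate pair of fixed spiders (phase $0$ and phase $\pi$, i.e. a $Z$ and an $X$-type projector pattern) connected so as to enforce that the two legs carry the same classical value.

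The key steps, in order, are: (i) fix notation by writing the left-hand diagram as a linear map depending on $\theta_k$, isolating the two spiders carrying $\pm\theta_k$; (ii) expand each such spider using Eq.~\eqref{eq:def-spider} as $\ket{0}\!\bra{0} + e^{\pm i\theta_k}\ket{1}\!\bra{1}$ on its distinguished leg, obtaining four diagrammatic terms with scalar prefactors $1, e^{i\theta_k}, e^{-i\theta_k}, 1$; (iii) apply $\frac{1}{2\pi}\int_{-\pi}^{\pi}\!\cdot\,\mathrm{d}\theta_k$ and use $\frac{1}{2\pi}\int_{-\pi}^{\pi} e^{ik\theta}\,\mathrm{d}\theta = \delta_{k,0}$ to drop the two cross terms; (iv) recognize the remaining two terms ($\ket{0}\!\bra{0}\otimes\ket{0}\!\bra{0}$-type and $\ket{1}\!\bra{1}\otimes\ket{1}\!\bra{1}$-type on the two legs) as the expansion of a single Z-spider (or a Z-spider/green-node pattern) joining the two legs, i.e. as the right-hand ZX-diagram, taking care to track the scalar factor so the equation holds on the nose (not just up to nonzero scalar); (v) finally, re-fuse/re-graph-like-ify using the rules in Figure~\ref{fig:zx_rules_scalar} (fusion \FusionRule, Hadamard \HadamardRule) so the right-hand side is displayed in the intended normalized form, and note the computation is independent of whatever the rest of the diagram (the ``$\dots$'' context) is, so it lifts to the full PQC diagram by functoriality of diagram substitution.

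The main obstacle I expect is purely bookkeeping rather than conceptual: getting the \emph{scalars} exactly right. The ZX rules as usually stated ignore global scalars, but here (as the paper emphasizes) the scalar is essential for the barren-plateau scaling, so I must use the scalar-decorated rules of Figure~\ref{fig:zx_rules_scalar} and carefully account for the $\frac{1}{2\pi}$ normalization, the factors of $\tfrac12$ or $\sqrt{2}$ coming from expanding spiders/Hadamards, and the fact that the gradient diagram of Theorem~\ref{thm:zx-gradient} already carries a constant in front. A secondary subtlety is making sure the two copies of the $\theta_k$-spider are correctly identified as ``the'' pair that must be simultaneously expanded — in a graph-like diagram an $R_Z(\theta_k)$ and its conjugate $R_Z(-\theta_k)$ really do appear as two distinct spiders with opposite phases, and the argument hinges on their phases being exact negatives so that only the $e^{\pm i\theta_k}$ cross terms (and not, say, $e^{2i\theta_k}$) show up. Once these two points are handled, the integral itself is the one-line orthogonality relation for characters of the circle, and the rest is diagram rewriting. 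I would relegate the full scalar-tracking to an appendix, mirroring the treatment of Theorem~\ref{thm:zx-gradient}.
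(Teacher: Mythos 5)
Your proposal is correct and follows essentially the same route as the paper's own argument in Appendix~\ref{sec:proof_of_lemma_1}: expand the two spiders carrying $\pm\theta_k$ via the definition in Eq.~\eqref{eq:def-spider}, observe that of the four resulting terms the two cross terms vanish under $\frac{1}{2\pi}\int_{-\pi}^{\pi}e^{\pm i\theta_k}\,\mathrm{d}\theta_k=0$, and recognize the surviving $\ket{0\cdots0}\bra{0\cdots0}$- and $\ket{1\cdots1}\bra{1\cdots1}$-type terms as the single (phase-free) Z-spider on the combined legs that forms the right-hand side, with the scalar bookkeeping handled as you indicate. No further changes are needed.
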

\begin{proof}
{Expand the Z-spiders according to its definition. There will be four terms on the left-hand side. By using \[
        \frac{1}{2 \pi}\int_{-\pi}^{\pi}e^{i \alpha} \mathrm{d}\alpha = 0,
    \]
only two terms left. We will use the definition of the Z-spider again to obtain the ZX-diagram on the right-hand side.}
The complete proof is given in Appendix~\ref{sec:proof_of_lemma_1}.
\end{proof}

With this lemma, {we give a graphical proof of the following theorem, which is also proved in~\cite{holmes2021connecting}.}
\begin{theorem}
    Under Assumption~\ref{assumption-PQC}, the integration \[
  \frac{1}{2\pi}\int_{\theta_j}\frac{\partial \braket{H}}{\partial \theta_j}\mathrm{d}\theta_j = 0.
\]
\end{theorem}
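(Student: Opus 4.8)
The plan is to reduce the statement to the elementary identity $\frac{1}{2\pi}\int_{-\pi}^{\pi}e^{i\alpha}\,\mathrm{d}\alpha = 0$ that already powers the proof of Lemma~\ref{thm:integration-1}. First I would invoke Theorem~\ref{thm:zx-gradient} to replace $\frac{\partial \braket{H}}{\partial \theta_j}$ by its ZX-diagram. Under Assumption~\ref{assumption-PQC} the parameter $\theta_j$ drives a single $R_Z(\theta_j)$ gate, so in the doubled diagram of Eq.~\eqref{eq:zx-expectation} it occurs in exactly two Z-spiders: one with phase $\theta_j$ coming from $U(\vec{\theta})$ and one with phase $-\theta_j$ coming from $U^{\dagger}(\vec{\theta})$. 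In the gradient diagram produced by Theorem~\ref{thm:zx-gradient} these two spiders are the \emph{only} places where $\theta_j$ appears, and each of them has acquired one extra leg joined to a common X-spider, which is precisely the diagrammatic effect of differentiating the product of the two phase factors $e^{i\theta_j}$ and $e^{-i\theta_j}$.

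Next I would expand these two Z-spiders according to the spider definition in Eq.~\eqref{eq:def-spider}. A priori this yields four terms, but the X-spider linking the two new legs forces exactly one of the two Z-spiders into its phase branch: the two terms in which both spiders are in the same branch are annihilated, and the two surviving terms carry a single factor $e^{i\theta_j}$, respectively $e^{-i\theta_j}$, multiplied by a $\theta_j$-independent scalar and a $\theta_j$-independent ZX-diagram. This is the same bookkeeping as in the proof of Lemma~\ref{thm:integration-1}, except that here it is the phase-free terms that are killed by the X-spider while the oscillating terms remain.

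Finally, integrating term by term, $\frac{1}{2\pi}\int_{-\pi}^{\pi}e^{i\theta_j}\,\mathrm{d}\theta_j = \frac{1}{2\pi}\int_{-\pi}^{\pi}e^{-i\theta_j}\,\mathrm{d}\theta_j = 0$, so the whole integral vanishes. Equivalently, by the fundamental theorem of calculus $\frac{1}{2\pi}\int_{-\pi}^{\pi}\frac{\partial \braket{H}}{\partial \theta_j}\,\mathrm{d}\theta_j = \frac{1}{2\pi}\left( \braket{H}|_{\theta_j = \pi} - \braket{H}|_{\theta_j = -\pi}\right)$, and the two endpoints coincide because the Z-spiders of phase $+\pi$ and $-\pi$ obtained from $R_Z(\pm\pi)$ represent the same matrix ($e^{i\pi}=e^{-i\pi}$). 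I do not expect a genuine obstacle: the real content sits in Theorem~\ref{thm:zx-gradient}, and the only point needing a line of care is verifying that $\theta_j$ appears in no spider other than these two, which holds because under Assumption~\ref{assumption-PQC} each parameter controls a single parameterized gate.
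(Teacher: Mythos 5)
Your main argument is correct and is essentially the paper's proof with the two bookkeeping steps swapped: the paper first integrates the pair of $\pm\theta_j$ spiders using Lemma~\ref{thm:integration-1} (which kills the oscillating cross terms and keeps the two same-branch terms) and then observes that the surviving diagrams vanish because the $\pi$-phase X-spider introduced by Theorem~\ref{thm:zx-gradient} (Pauli $X$ up to a scalar) is contracted with equal computational basis states, $\langle 0|X|0\rangle=\langle 1|X|1\rangle=0$; you instead let that X-spider annihilate the same-branch terms first and then integrate the remaining $e^{\pm i\theta_j}$ terms to zero using $\frac{1}{2\pi}\int_{-\pi}^{\pi}e^{\pm i\alpha}\,\mathrm{d}\alpha=0$. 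Either order amounts to the same four-term expansion, so nothing is gained or lost, except that your phrasing re-derives the expansion instead of reusing Lemma~\ref{thm:integration-1} as a black box, which is how the paper keeps its proof to one line. Your closing observation, however, is a genuinely different and more elementary route: since $\braket{H}$ depends on $\theta_j$ only through the matrix of the single gate carrying $\theta_j$ and its adjoint, it is $2\pi$-periodic in $\theta_j$ (the spiders of phase $+\pi$ and $-\pi$ are literally the same matrix), so $\frac{1}{2\pi}\int_{-\pi}^{\pi}\frac{\partial\braket{H}}{\partial\theta_j}\,\mathrm{d}\theta_j=\frac{1}{2\pi}\bigl(\braket{H}|_{\theta_j=\pi}-\braket{H}|_{\theta_j=-\pi}\bigr)=0$ by the fundamental theorem of calculus, with no ZX machinery at all. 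That argument buys brevity and extra generality (any $2\pi$-periodic dependence suffices), whereas the paper's diagrammatic proof has the advantage of exercising exactly the integration lemmas that are reused for the variance computation; your side condition that $\theta_j$ occurs in no other spider is indeed guaranteed by Assumption~\ref{assumption-PQC}.
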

\begin{proof}
Using the relation in Lemma~\ref{thm:integration-1} on the ZX-diagram in Theorem~\ref{thm:zx-gradient}, we have
\ctikzfig{integration-theta-j}
\end{proof}

As a corollary, the expectation of the gradient in Eq.~\eqref{eq:gradient-expectation-id} is zero.
\begin{corollary}
    Under Assumption~\ref{assumption-PQC}, the expectation \[
        \mathbf{E}(\frac{\partial \braket{H}}{\partial \theta_j}) = \frac{1}{(2\pi)^m}\int_{\theta_1}\dots\int_{\theta_m}\frac{\partial \braket{H}}{\partial \theta_j} \mathrm{d}\theta_1\dots\mathrm{d}\theta_m = 0,\text{ for } j=1,\dots,m.
  \]
\end{corollary}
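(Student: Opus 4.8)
The plan is to reduce the $m$-fold integral defining $\mathbf{E}(\partial\braket{H}/\partial\theta_j)$ to the single-variable integral already handled by the theorem immediately preceding this corollary, using Fubini's theorem. First I would note that, under Assumption~\ref{assumption-PQC}, $\braket{H}$ is a finite trigonometric polynomial in $\theta_1,\dots,\theta_m$: each parameterized gate $R_X(\theta_k)$ or $R_Z(\theta_k)$ contributes only factors $\cos(\theta_k/2)$ and $\sin(\theta_k/2)$, while the non-parameterized gates $H$, $\mathrm{CNOT}$ and the Hamiltonian $H$ contribute constants. Consequently $\frac{\partial \braket{H}}{\partial \theta_j}$ is a bounded continuous function on the compact domain $[-\pi,\pi]^m$, hence absolutely integrable there, so Fubini's theorem applies and the iterated integral may be evaluated in any order.

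Next I would single out the integration over $\theta_j$ and carry it out first, writing
\[
\mathbf{E}\!\left(\frac{\partial \braket{H}}{\partial \theta_j}\right)
= \frac{1}{(2\pi)^{m-1}}\int_{\theta_1}\!\cdots\!\int_{\theta_m}
\left(\frac{1}{2\pi}\int_{\theta_j}\frac{\partial \braket{H}}{\partial \theta_j}\,\mathrm{d}\theta_j\right)
\prod_{k\neq j}\mathrm{d}\theta_k,
\]
where the outer product of integrals ranges over all $k\neq j$. By the theorem proved just above (the graphical argument obtained by applying Lemma~\ref{thm:integration-1} to the ZX-diagram of Theorem~\ref{thm:zx-gradient}), the inner integral $\frac{1}{2\pi}\int_{\theta_j}\frac{\partial \braket{H}}{\partial \theta_j}\,\mathrm{d}\theta_j$ vanishes; moreover it vanishes identically, i.e. it is the zero function of the remaining parameters $\theta_k$, $k\neq j$. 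Therefore the outer $(m-1)$-fold integral is the integral of the zero function, which is $0$, so $\mathbf{E}(\partial\braket{H}/\partial\theta_j)=0$ for every $j=1,\dots,m$.

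I do not expect a genuine obstacle here: this is a direct corollary of the already-established single-parameter identity. The only point meriting a word of care is the justification of Fubini's theorem, i.e. verifying that $\frac{\partial \braket{H}}{\partial \theta_j}$ is integrable over $[-\pi,\pi]^m$, which is immediate from its boundedness as noted above; everything else is routine.
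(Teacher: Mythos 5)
Your argument is correct and is exactly the reasoning the paper intends: the corollary follows by performing the $\theta_j$-integration first (justified by Fubini, which is immediate since the integrand is a bounded trigonometric polynomial) and invoking the preceding theorem, whose graphical proof shows the inner integral vanishes identically in the remaining parameters. The paper leaves this step implicit as an immediate consequence, so your write-up matches its route, merely making the routine integrability point explicit.
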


\subsection{The variance of gradients} 
\label{sub:the_variance_of_gradients}

In this section, we will compute the variance in Eq.~\eqref{eq:gradient-variance}.

Because the gradient $\frac{\partial \braket{H}}{\partial \theta_j}$ is a real number and the expectation is 0, by Eq.~\eqref{eq:gradient-variance}, the variance is the expectation of $\left(\frac{\partial \braket{H}}{\partial \theta_j}\right)^2$, which can be represented as follows by Theorem~\ref{thmt@@zxgrad}.
\begin{equation}
  \tikzfig{gradient-2}
  \label{eq:gradient-2}
\end{equation}

Similar to Lemma~\ref{thm:integration-1}, we can prove the following lemma.
\begin{restatable}{lemma}{inttwo}
  The following equation holds.
  \begin{figure}[h]
    \centering
    \scalebox{0.6}{\tikzfig{integration-2}}
  \end{figure}
  \label{thm:integration-2}
\end{restatable}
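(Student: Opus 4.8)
The plan is to mirror the proof of Lemma~\ref{thm:integration-1} but now handle the integral of a product of \emph{two} copies of the phase gadget that arises from differentiating $\braket{H}$ twice with respect to the same parameter $\theta_j$. First I would expand each of the four Z-spiders carrying the phase $\theta_j$ (two from the ``bra'' side of $\left(\partial\braket{H}/\partial\theta_j\right)^2$ and two from the ``ket'' side, or rather the two copies produced by squaring the gradient diagram in Eq.~\eqref{eq:gradient-2}) according to the defining equation Eq.~\eqref{eq:def-spider}, writing each spider as $\ket{0\cdots0}\bra{0\cdots0} + e^{i\theta_j}\ket{1\cdots1}\bra{1\cdots1}$. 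Multiplying these out gives a sum of terms, each weighted by a monomial $e^{ik\theta_j}$ with $k\in\{-2,-1,0,1,2\}$ once one accounts for the fact that on one side of the diagram the phase appears as $\theta_j$ and on the conjugated side as $-\theta_j$ (since the diagram for $\left(\partial\braket{H}/\partial\theta_j\right)^2$ contains $U$ and $U^\dagger$ factors). The key computation is then
\[
  \frac{1}{2\pi}\int_{-\pi}^{\pi} e^{ik\theta_j}\,\mathrm{d}\theta_j = \begin{cases} 1 & k = 0,\\ 0 & k \neq 0,\end{cases}
\]
so only the terms with matching powers survive, and I expect exactly the ``diagonal'' pairings to remain.

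The second step is to re-assemble the surviving terms back into ZX-diagram form. Here I would use the definition of the X-spider (and the Z-spider) in reverse, exactly as in the proofs of Theorem~\ref{thm:zx-gradient} and Lemma~\ref{thm:integration-1}, to recognize the surviving sum $\ket{0\cdots0}\bra{0\cdots0}\otimes\ket{0\cdots0}\bra{0\cdots0} + \ket{1\cdots1}\bra{1\cdots1}\otimes\ket{1\cdots1}\bra{1\cdots1}$-type expression as a bundle of Z- and X-spiders connecting the four ``legs'' left behind by the removed phase spiders. The scalar bookkeeping is where I would be careful: the factors of $\tfrac12$ coming from the normalization in Eq.~\eqref{eq:def-spider} and the Hadamard boxes must be tracked using the scalar-annotated rules in Figure~\ref{fig:zx_rules_scalar}, so that the constant appearing in the final equation of Lemma~\ref{thm:integration-2} is correct rather than merely correct up to an unspecified nonzero scalar.

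The main obstacle, I expect, is purely combinatorial rather than conceptual: because we are squaring the gradient, the diagram has four copies of the relevant wire neighborhood (two from each factor of $\partial\braket{H}/\partial\theta_j$), and after the integral kills the cross terms one must correctly identify how the remaining spiders wire up across these four copies — essentially checking that the integration enforces a ``GHZ-like'' correlation $\ket{0000}+\ket{1111}$ among the four stubs, which is what produces a new Z-spider of arity four (with appropriate Hadamard edges) in the output diagram. Getting the orientation of the Hadamard edges and the arity of the resulting spider right, and confirming it against the picture \texttt{integration-2}, is the fiddly part. Once the $k=0$ terms are isolated, rewriting them back into the claimed diagram is routine, so I would present the phase-expansion and the $e^{ik\theta_j}$ integral as the heart of the argument and relegate the full scalar-tracked diagram manipulation to the appendix, as the authors do for Lemmas~\ref{thm:integration-1} and Theorem~\ref{thm:zx-gradient}.
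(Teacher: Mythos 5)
Your overall strategy coincides with the paper's own proof: expand each of the four $\theta_j$-carrying Z-spiders via the definition in Eq.~\eqref{eq:def-spider}, observe that every monomial $e^{ik\theta_j}$ with $k\neq 0$ is annihilated by $\frac{1}{2\pi}\int_{-\pi}^{\pi}e^{ik\theta_j}\,\mathrm{d}\theta_j$, and then reassemble the surviving terms into spiders, tracking the constants with the scalar-annotated rules of Figure~\ref{fig:zx_rules_scalar}. Up to that point your proposal and the paper's appendix proof are the same.

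The problem is in what you predict survives. Writing each expanded spider as $P_0 + e^{\pm i\theta_j}P_1$ with $P_0=\ket{0\cdots0}\bra{0\cdots0}$, $P_1=\ket{1\cdots1}\bra{1\cdots1}$ (two occurrences carrying $+\theta_j$ and two carrying $-\theta_j$), the condition $k=0$ keeps \emph{six} of the sixteen terms, not two: besides the all-$P_0$ and all-$P_1$ configurations you describe, the four cross-paired configurations such as $(P_1,P_1,P_0,P_0)$ and $(P_1,P_0,P_0,P_1)$ also have net phase zero and survive. Hence the integrated object is not a single arity-four GHZ-like spider enforcing $\ket{0000}+\ket{1111}$; it is a sum of three distinct diagrams, the $T_1,T_2,T_3$ of Eq.~\eqref{eq:T1-T2-T3}. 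This three-term structure is the whole content of the lemma — it is why the paper remarks that the variance is much harder than the expectation, and it is what feeds the matrix $M$ of Lemma~\ref{thm:integration-hadamard} and the sum over $3^{m-1}$ assignments later. If you carry out your own enumeration you will be forced to the six survivors and then to the three-term right-hand side, so the machinery self-corrects; but as written, the claim that only the "diagonal" pairings remain and the proposed single-spider reassembly would establish a different (and false) identity, and that step must be fixed before the rest of your argument, which otherwise mirrors the paper, goes through.
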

\begin{proof}
{We expand all Z-spiders and there will be 16 terms on the left-hand side. Again, using the relation \[
        \frac{1}{2 \pi}\int_{-\pi}^{\pi}e^{i \alpha} \mathrm{d} \alpha = 0,
    \]
6 terms was left. We can use the definition of the Z-spider again to obtain the right-hand side.
The complete proof is given in Appendix~\ref{sec:proof_of_lemma_2}.}
\end{proof}

There exist three terms after integration. Hence, computing the variance of gradients is much more complicated than computing the expectation.
We denote the three ZX-diagrams in Lemma~\ref{thm:integration-2} as
\begin{equation}
  \tikzfig{T1-T2-T3}
  \label{eq:T1-T2-T3}
\end{equation}
And we introduce a new notation
\[
    V_{U}^{a_1,\dots,a_m},\ a_j\in\{T_1,T_2,T_3\},
\]
to represent the following ZX-diagram.
\begin{equation}
    \scalebox{0.9}{\tikzfig{V-def}}
  \label{eq:V-def}
\end{equation}
Here, $U(\theta_1,\dots,\theta_m)$ is a PQC with $m$ parameters and $G_U$ is the graph-like ZX-diagram corresponding to $U$. With this notation, we have the following theorem.
\begin{theorem}
    Under Assumption~\ref{assumption-PQC}, the following equation holds.
    \label{thm:var_theta_j}
\end{theorem}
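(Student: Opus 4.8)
The plan is to rewrite the variance as a single $m$-fold integral of the ZX-diagram for the squared gradient and then to evaluate that integral one parameter at a time with Lemma~\ref{thm:integration-2}, packaging the outcome into the gadget notation $V_U^{a_1,\dots,a_m}$ of Eq.~\eqref{eq:V-def}. First I would invoke the observation recorded just before Eq.~\eqref{eq:gradient-2}: the quantity $\frac{\partial \braket{H}}{\partial \theta_j}$ is real and, by the corollary of Section~\ref{sub:the_expectation_value_of_gradients}, its expectation is $0$; hence the subtracted term in Eq.~\eqref{eq:gradient-variance-id} vanishes and
\[
  \mathbf{Var}\left(\frac{\partial \braket{H}}{\partial \theta_j}\right)
  = \frac{1}{(2\pi)^m}\int_{\theta_1}\dots\int_{\theta_m}\left(\frac{\partial \braket{H}}{\partial \theta_j}\right)^{2}\mathrm{d}\theta_1\dots\mathrm{d}\theta_m .
\]
By Theorem~\ref{thmt@@zxgrad} applied to both factors, the integrand is precisely the ZX-diagram of Eq.~\eqref{eq:gradient-2}: four copies of the graph-like diagram $G_U$ (two copies of $G_U$, two of its adjoint, glued through two copies of $H$ and the boundary $\ket{0}$-spiders) with the block around the $j$-th gate replaced by the differentiated piece, so the whole problem reduces to integrating this one fixed diagram over all $m$ phases.

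Then, using Fubini to write $\frac{1}{(2\pi)^m}\int_{\theta_1}\dots\int_{\theta_m}=\prod_{k=1}^{m}\frac{1}{2\pi}\int_{\theta_k}$, I would carry out the integrations successively, in any order. For a fixed $k$, the four Z-spiders carrying the phases $\pm\theta_k$ occupy a local region of the diagram sitting inside an arbitrary context formed by all the other spiders and wires; Lemma~\ref{thm:integration-2} rewrites exactly that region as the fixed linear combination of the three gadgets $T_1,T_2,T_3$ of Eq.~\eqref{eq:T1-T2-T3}, with the scalar coefficients supplied by the lemma, and leaves the context untouched. Doing this for every $k$ and expanding by multilinearity of diagram composition, each monomial in the resulting sum is exactly the diagram $V_U^{a_1,\dots,a_m}$ of Eq.~\eqref{eq:V-def} with the $k$-th block set to $a_k\in\{T_1,T_2,T_3\}$, weighted by the product of the per-parameter scalars; for the differentiated index $j$ the same lemma applies to the surviving $\pm\theta_j$-spiders, with the extra X-spiders produced by Theorem~\ref{thmt@@zxgrad} simply riding along in the context (equivalently, pinning $a_j$ to the distinguished configuration). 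Summing the monomials yields the claimed identity.

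The step I expect to be the genuine obstacle is the locality claim used at each integration: one must verify that, once $U(\vec\theta)$ has been put into graph-like form, every parameter $\theta_k$ still occurs only in phase spiders of the shape that Lemma~\ref{thm:integration-2} expects — no parameter appears in two different gates, and no parameter-carrying spider has been fused with a non-parameterized one during the graph-like normalization. This is exactly where Assumption~\ref{assumption-PQC} (all gates from $\{R_X,R_Z,H,\mathrm{CNOT}\}$, the $\theta_k$ pairwise distinct and uniform on $[-\pi,\pi]$) together with the normalization procedure of Section~\ref{sub:zx_calculus} must be used with care. Once that is granted, order-independence of the $m$ integrations is automatic, and what remains — tracking the scalar prefactors and the special role of the index $j$ — is routine bookkeeping.
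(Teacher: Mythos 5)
Your overall route is the same as the paper's: use that the gradient is real with zero expectation so the variance is the plain $m$-fold average of the squared gradient, represent that integrand by the diagram of Eq.~\eqref{eq:gradient-2}, integrate one parameter at a time via Lemma~\ref{thm:integration-2}, and collect the resulting monomials into the tensors $V_U^{a_1,\dots,a_m}$ of Eq.~\eqref{eq:V-def}. The genuine gap is at the distinguished slot $j$. You claim the two $\pi$-phase X-spiders produced by Theorem~\ref{thm:zx-gradient} ``simply ride along in the context (equivalently, pinning $a_j$ to the distinguished configuration)'', but these two readings are not equivalent: if the X-spiders merely rode along, applying Lemma~\ref{thm:integration-2} to the four $\pm\theta_j$-spiders would leave a sum over all three values $a_j\in\{T_1,T_2,T_3\}$, each decorated by the attached X-spiders, whereas the theorem asserts that only $a_j=T_2$ contributes. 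The paper closes exactly this point with explicit diagrammatic identities after the $\theta_j$-integration, showing that the $T_1$- and $T_3$-shaped terms are killed by the attached $\pi$ X-spiders while the $T_2$-shaped term absorbs them up to a scalar; that computation is also where the prefactor $\frac{|c|^2}{4^n}$ (from the constant $c$ in $U=c\,[G_U]$ and the boundary X-spiders) is produced. You neither identify $T_2$ as the surviving configuration nor give the cancellation argument, and you defer the scalar to ``routine bookkeeping'', so the identity as stated is not yet established by your argument.

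By contrast, the step you single out as the main obstacle is not where the content lies. Under Assumption~\ref{assumption-PQC} each $\theta_k$ appears in a single $R_X$ or $R_Z$ gate, hence in the squared-gradient diagram it appears on exactly four spiders with phases $\pm\theta_k$, which is precisely the shape Lemma~\ref{thm:integration-2} expects; one only has to refrain from fusing a parameterized spider with a phase-carrying neighbour when passing to graph-like form (which the construction via Eq.~\eqref{eq:gates_to_zx} respects, and the paper uses implicitly). Once you add the slot-$j$ computation described above, your proof matches the paper's.
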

\[
  \mathbf{Var}\left(\frac{\partial\braket{H}}{\partial \theta_j}\right) = \frac{|c|^2}{4^n}\cdot\sum_{a_k\in\{T_1,T_2,T_3\},\ k\neq j} V_U^{a_1,\dots,a_{j-1},T_2,a_{j+1},\dots,a_m}.
\]
\begin{proof}
By Eq.~\eqref{eq:gradient-2} and Lemma~\ref{thm:integration-2}, we have the following equation.
\ctikzfig{integration-2-theta-j-1}
And by the following equations,
\[
  \scalebox{0.8}{\tikzfig{integration-2-theta-j-proof}}
\]
we have
\ctikzfig{integration-2-theta-j-2}
Then, by the definition of $V_{U}^{a_1,\dots,a_m}$, we obtain
\[
  \mathbf{Var}\left(\frac{\partial\braket{H}}{\partial \theta_j}\right) = \frac{|c|^2}{4^n}\cdot\sum_{a_k\in\{T_1,T_2,T_3\},\ k\neq j} V_U^{a_1,\dots,a_{j-1},T_2,a_{j+1},\dots,a_m}.
\]
\end{proof}

Hence, to compute the variance, we need to sum over $3^{m-1}$ terms of the tensor \[
  V_U^{a_1,\dots,a_{j-1},T_2,a_{j+1},\dots,a_m}.
\] It seems inaccessible when $m$ is large. But in many cases, we have simple ways to compute this sum.

{Recall that we have converted quantum circuits to graph-like ZX-diagrams. Hence, spiders are connected with Hadamard edges.} Let us consider two spiders $W_j,W_k$ corresponding to the parameters $\theta_j,\theta_k$ in $G_U$. Suppose that $W_j$ and $W_k$ are connected with a Hadamard edge. Then by the following lemma, the Hadamard edge can be removed after integration over $\theta_j$ and $\theta_k$.
\begin{restatable}{lemma}{inthad}
The following equation holds.
\[
  \scalebox{0.75}{\tikzfig{integration-hadamard}}
\]
\label{thm:integration-hadamard}
Here $M_{a_j,a_k}$ is the element on the $a_j$-th row and $a_k$-th column in the following $3\times 3$ matrix \[
  M = \frac{1}{4}\begin{pmatrix}
  1 & 1 & 1 \\ 1 & 1 & -1 \\ 1 & -1 & 1
  \end{pmatrix}
\]
\end{restatable}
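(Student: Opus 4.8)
The plan is to reduce the claimed identity to a finite, purely diagrammatic check by exploiting that each of $T_1,T_2,T_3$ (as defined in Eq.~\eqref{eq:T1-T2-T3}) is obtained by doubling a spider and decorating it, so that integrating two such doubled spiders that are joined by a Hadamard edge is a small computation in the $3\times 3$ "label space''. First I would recall the explicit form of the three tensors $T_1,T_2,T_3$ from Lemma~\ref{thm:integration-2}: each is a fixed graph-like gadget on the two legs coming from $U$ and the two legs coming from $U^{\dagger}$ (the "doubled'' picture underlying the variance diagram \eqref{eq:gradient-2}). The Hadamard edge between $W_j$ and $W_k$ in $G_U$ becomes, in the doubled picture entering $V_U^{\dots}$, a pair of Hadamard edges — one in the "ket'' copy and one in the "bra'' copy — each connecting the corresponding legs of the $a_j$-gadget to those of the $a_k$-gadget. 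So the left-hand side is the contraction of $T_{a_j}$ with $T_{a_k}$ along these two Hadamard edges, and the right-hand side asserts this contraction equals $M_{a_j,a_k}$ times the diagram in which both gadgets are simply disconnected (the Hadamard edge deleted).

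The key steps, in order: (1) Write each $T_i$ in graph-like form and identify, on the two "internal'' legs that will be glued, its $2\times 2$ (or in general its reduced) tensor; concretely, pushing a Hadamard through a Z-spider and using the spider/fusion rules \FusionRule, the copy rule \CopyRule, and the Hadamard rule \HadamardRule from Figure~\ref{fig:zx_rules_scalar}, one sees that contracting $T_{a_j}$ and $T_{a_k}$ through a doubled Hadamard edge collapses to a scalar times the "split'' diagram, because each $T_i$ restricted to the glued legs is (up to scalar) either an identity-like or a $\ket{0}\!\bra{0}+\ket{1}\!\bra{1}$-type map and the Hadamard conjugation just permutes these. (2) Compute the resulting scalar in all nine cases $a_j,a_k\in\{T_1,T_2,T_3\}$ — this is where the entries $\tfrac14,\,\pm\tfrac14$ of $M$ come from, the sign being $-\tfrac14$ exactly when one of the two gadgets is $T_2$ and the other is $T_3$ (the two "$\pi$-phase''-type terms that survived integration), and $+\tfrac14$ otherwise. (3) Assemble: since the contraction factorizes through this scalar, the remaining diagram is precisely the Hadamard-edge-deleted one, giving the stated equation with coefficient $M_{a_j,a_k}$. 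It is cleanest to do steps (1)–(2) once for a generic doubled Hadamard edge and then read off all nine entries, rather than drawing nine separate rewrites.

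The main obstacle I anticipate is bookkeeping in the doubled picture: one must be careful that the single Hadamard edge of $G_U$ contributes two Hadamard edges in $V_U^{\dots}$ (one per copy of $U$), that the scalars from Figure~\ref{fig:zx_rules_scalar} are tracked consistently (the whole point of using the scalar-exact rules), and that the orientation/color conventions for the legs of $T_1,T_2,T_3$ inherited from Lemma~\ref{thm:integration-2} are respected so that the signs land in the right off-diagonal positions of $M$. A secondary subtlety is that $T_1,T_2,T_3$ are not all simple spiders — $T_2$ carries the "derivative'' $\pi$-structure from Theorem~\ref{thmt@@zxgrad} — so the contraction in case $a_j=a_k=T_2$ (the case actually needed in Theorem~\ref{thm:var_theta_j}) should be double-checked to confirm it gives $+\tfrac14$ and not a vanishing or sign-flipped term. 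Once the nine scalars are verified by the scalar ZX-rules, the lemma follows by deleting the now-decoupled Hadamard edge.
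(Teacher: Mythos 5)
Your proposal takes essentially the same route as the paper's proof: apply Lemma~\ref{thm:integration-2} to each of the two parameters so the integral splits into the nine cases $a_j,a_k\in\{T_1,T_2,T_3\}$, then evaluate each contraction with the scalar-exact rules of Figure~\ref{fig:zx_rules_scalar} so that the connecting Hadamard edges factor out as the scalar $M_{a_j,a_k}$, equal to $-\tfrac14$ precisely for $(T_2,T_3)$ and $(T_3,T_2)$ and $+\tfrac14$ otherwise, which is exactly the paper's case analysis (it works out $(T_1,T_1)$ and $(T_2,T_3)$ explicitly and notes the remaining cases are analogous). One bookkeeping correction: since the variance diagram in Eq.~\eqref{eq:gradient-2} contains two copies of $G_U$ and two of $G_U^{\dagger}$, the single Hadamard edge of $G_U$ joins the $T_{a_j}$- and $T_{a_k}$-gadgets through \emph{four} Hadamard edges, not the two (``one per ket/bra copy'') you describe; the nine-case computation goes through unchanged once the diagrams are drawn with all four edges.
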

\begin{proof}
    {By Lemma~\ref{thmt@@inttwo}, there will be 9 terms on the left-hand side. We can use rewriting rules in Figure~\ref{fig:zx_rules_scalar} on each term to remove Hadamard edges to obtain the form on the right-hand side.
    The complete proof is given in Appendix~\ref{sec:proof_of_lemma_3}.}
\end{proof}

Applying this lemma to the variance recursively, we can remove all the Hadamard edges connecting two parameterized spiders. And the big tensor \[
  V_U^{a_1,\dots,a_{j-1},T_2,a_{j+1},\dots,a_m}
\] will be broken into smaller tensors that are connected with $M$. It is a new tensor network whose structure is similar to $G_U$. To compute the variance, the only thing we need to do is contracting this new tensor network. Figure~\ref{fig:variance-tensor-network} demonstrates the above procedure for the case that all spiders in $G_U$ are parameterized and are connected with Hadamard edges.
\begin{figure}[h]
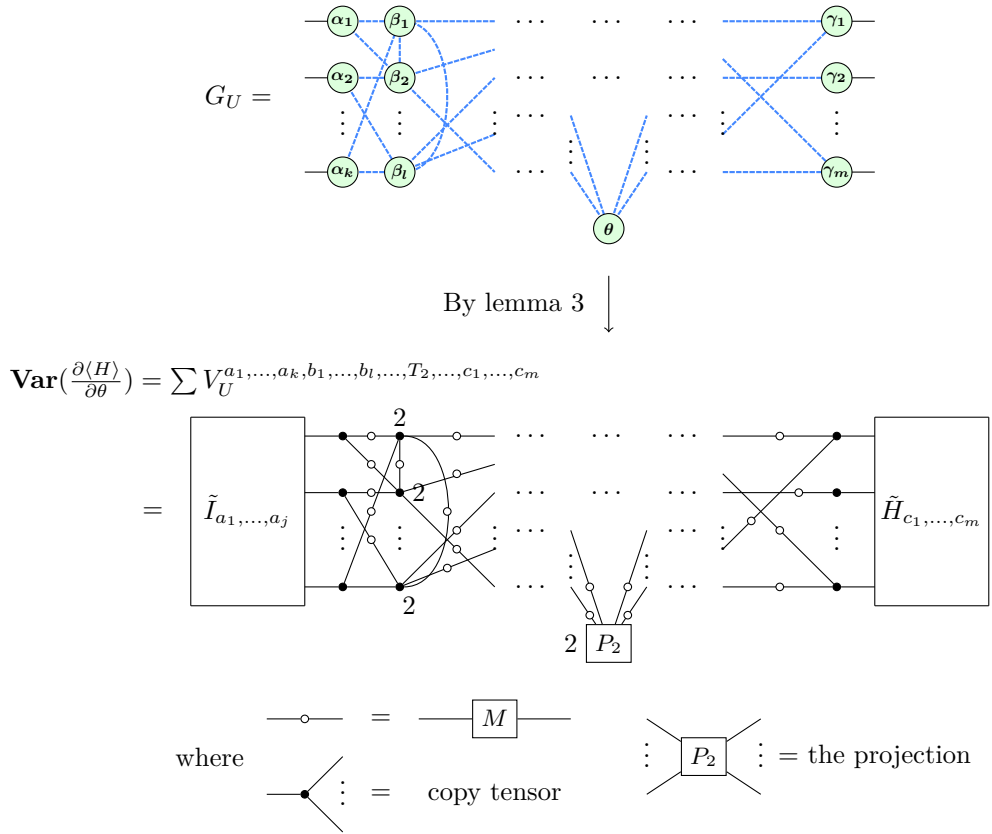

  \centering
  \scalebox{1}{\tikzfig{variance-tensor-network}}
  \caption{Computing the variance with the tensor network}
  \label{fig:variance-tensor-network}
\end{figure}
The tensor $\tilde{I}_{a_1,\dots,a_k}$ is related to the input state, while the tensor $\tilde{H}_{c_1,\dots,c_m}$ is related to the Hamiltonian $H$. And $P_2$ is a projection that has only one non-zero entry. That is \[
  P_2(2,2,\dots,2) = 1.
\]
Also, note that there is a scalar $2$ for each internal copy tensor. This scalar comes from the following equation.
\begin{equation}
  \tikzfig{T1-T2-T3-scalar}
\end{equation}

{In Appendix~\ref{appendix:a_simple_example}, a simple example is given to illustrate the techniques introduced in this paper.}

In conclusion, computing the variance of gradients is reduced to contracting a tensor network corresponding to the circuit. In the next section, we will use these techniques to analyze the BP phenomenon for several for concrete PQCs.
%

\section{Analyzing the BP phenomenon for four PQCs} 
\label{sec:examples}

In this section, we will analyze the BP phenomenon for four PQCs with the techniques introduced in Section~\ref{sec:analyzing_the_bp_phenomenon_with_the_zx_calculus}.

\subsection{Hardware-efficient ansatz} 
\label{sub:hardware_efficient_ansatz}

Consider a hardware-efficient ansatz~\cite{kandala2017hardware} PQC of the following form.
\begin{equation}
  \scalebox{0.75}{\tikzfig{hardware-efficient-circuit}}
  \label{eq:hardware-efficient-circuit}
\end{equation}

{The first step is converting it to a graph-like ZX-diagram. }
Suppose the circuit is of $n$-qubits. By the conversion rules in Eq.~\eqref{eq:gates_to_zx} and rewriting rules in Figure~\ref{fig:zx_rules_scalar}, we obtain a graph-like ZX-diagram where most spiders are parameterized.
\[
  \scalebox{1}{\tikzfig{hardware-efficient-zx}}
\]
Here, the Z-spider with ``$\dots$'' represents a Z-spider with a parameter.

{Then we are going to represent the variance of gradients as a tensor network.}
By Lemma~\ref{thm:integration-hadamard}, we can remove most Hadamard edges. Then the remaining part consists of \[
  \scalebox{1}{\tikzfig{hardware-efficient-entangler}}.
\]
By similar techniques used in Lemma~\ref{thm:integration-hadamard}, we can prove the following lemma.
\begin{restatable}{lemma}{intcnot}
The following equation holds.
\[
  \scalebox{0.75}{\tikzfig{integration-entangler}}
\]
\label{thm:integration-entangler}
Here $ET_{a_1,a_2,a_3}$ is an element in the following $3\times 3\times 3$ tensor \[
  ET[1,\cdot,\cdot] = \frac{1}{8}\begin{pmatrix}
    1 & 0 & 0 \\ 0 & 1 & 0 \\ 0 & 0 & 1
  \end{pmatrix},\quad ET[2,\cdot,\cdot] = \frac{1}{8}\begin{pmatrix}
    0 & 1 & 0 \\ 1 & 0 & 0 \\ 0 & 0 & 0
  \end{pmatrix},\quad ET[3,\cdot,\cdot] = \frac{1}{8}\begin{pmatrix}
    0 & 0 & 1 \\ 0 & 0 & 0 \\ 1 & 0 & 0
  \end{pmatrix}.
\]
\end{restatable}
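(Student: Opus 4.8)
The plan is to mimic exactly the strategy used in the proof of Lemma~\ref{thm:integration-hadamard}, but applied to the three-spider gadget that survives after the Hadamard-edge removals, namely the CNOT-entangler piece \scalebox{1}{\tikzfig{hardware-efficient-entangler}} together with the two parameterized Z-spiders it connects (indexed by $a_1$ on one wire, $a_2$ on the other) and the shared internal structure producing the third index $a_3$. First I would invoke Lemma~\ref{thmt@@inttwo} to replace each of the two parameterized spiders under its integral by the three-term sum $T_1,T_2,T_3$ defined in Eq.~\eqref{eq:T1-T2-T3}; since there are two integrations this expands the left-hand side into $3\times 3 = 9$ terms, indexed by $(a_1,a_2)$. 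For each of the nine terms, the remaining diagram is Clifford (only $\ket{0}/\ket{1}$-type X-spiders, Hadamard edges, and CNOTs, i.e. bialgebra/copy material), so I would normalize it using the scalar-tracking rules in Figure~\ref{fig:zx_rules_scalar} together with \BialgRule, \CopyRule, \HadamardRule, \FusionRule, and the identity rules, exactly as in Appendix~\ref{sec:proof_of_lemma_3}.

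The key computational step is to show that each normalized term is either zero or is a single basis ZX-diagram carrying the third index $a_3\in\{1,2,3\}$ with the stated scalar $\tfrac18$, and to read off from the $3^3$ possibilities which entries survive. Concretely: a CNOT in ZX is a Z-spider on the control fused to an X-spider on the target; pushing the $T_k$ gadgets (each of which, after expansion, is a combination of a plain wire, a $\ket{0}\!\bra{0}$-type and a $\ket{1}\!\bra{1}$-type projector, or a $\pi$-phase, depending on $k$) through this CNOT via the copy rule \CopyRule and the $\pi$-copy rule \PiRule either copies the projector/phase onto the other wire or annihilates it. I expect the bookkeeping to land exactly on the three $3\times 3$ slices displayed in the statement: the $a_1=1$ slice is diagonal (the ``identity-like'' gadget commutes through and forces $a_3=a_2$), while the $a_1=2$ and $a_1=3$ slices are the off-diagonal permutation-type patterns reflecting that a projector/phase on the control wire gets copied to the target, swapping or killing indices, with the lone surviving entries in each case being forced to a unique $a_3$. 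The scalar $\tfrac18$ should emerge uniformly as the product of the $\tfrac12$-type scalars from Figure~\ref{fig:zx_rules_scalar} generated when the copy and bialgebra rules collapse the Clifford core; I would double check this against the $\tfrac14$ normalization of $M$ in Lemma~\ref{thm:integration-hadamard} and the extra scalar $2$ per internal copy tensor noted after Figure~\ref{fig:variance-tensor-network}.

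The main obstacle will be the sheer number of diagrammatic cases — nine expanded terms, each requiring a short but distinct rewriting sequence through CNOT with the bialgebra and copy rules, and each needing its scalar tracked precisely (which is the part that is easy to get wrong, since the stabilizer completeness cited in Section~\ref{sub:zx_calculus} is only up to nonzero scalar). I would organize this by symmetry: the two wires play symmetric roles except for which one is the CNOT control, so several of the nine cases reduce to one another, and the cases where a $T_1$ (wire) gadget appears are immediate. What genuinely needs care is the handful of cases mixing $T_2$ and $T_3$ on control versus target, where the $\pi$-copy rule \PiRule interacts with the projector structure; those are exactly the cases that populate the off-diagonal slices $ET[2,\cdot,\cdot]$ and $ET[3,\cdot,\cdot]$, and verifying that the ``missing'' entries really vanish (because the relevant term produces a scalar-zero diagram, e.g.\ an X-spider with contradictory $\ket{0}$ and $\ket{1}$ inputs, or equivalently a $\braket{0|1}=0$) is the crux. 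The full case analysis I would defer to an appendix, in parallel with Appendices~\ref{sec:proof_of_lemma_1}--\ref{sec:proof_of_lemma_3}.
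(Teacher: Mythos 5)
Your plan is essentially the paper's own proof: reduce via Lemma~\ref{thmt@@inttwo} to a finite case analysis over the labels $T_1,T_2,T_3$, then for each of the nine $(a_1,a_2)$ configurations rewrite the CNOT-entangler core with the scalar-tracked rules of Figure~\ref{fig:zx_rules_scalar} to read off the coefficient vector over $a_3$ (with zero entries coming from vanishing diagrams), using symmetry to collapse repeated cases exactly as the paper does for $a_1=T_3$. No substantive difference in approach, and the deferral of the explicit nine-case diagram chase to an appendix mirrors Appendix~\ref{sec:proof_of_lemma_4}.
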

\begin{proof}
    Refer to Appendix~\ref{sec:proof_of_lemma_4}.
\end{proof}

Then we can construct a tensor network that is similar to Figure~\ref{fig:variance-tensor-network} as follows.
\begin{equation}
  \scalebox{0.6}{\tikzfig{hardware-efficient-tn}}
  \label{eq:hardware-efficient-tn}
\end{equation}
And if we want to compute the variance $\mathbf{Var}(\frac{\partial \braket{H}}{\partial \theta_j})$, we can just simply replace the copy tensor corresponding to $\theta_j$ in the above tensor network with the projection $P_2$.

{By now we have represented the variance as a tensor network. And now we are going to analyze the scaling property of this tensor network when the number of qubits $n$ and the depth of the circuit grow up.}

If we denote \[
  \scalebox{1}{\tikzfig{hardware-efficient-tn-block}},
\] then a layer can be represented as \[
  \scalebox{1}{\tikzfig{hardware-efficient-tn-layer}}.
\] And the whole tensor network in Eq.~\eqref{eq:hardware-efficient-tn} is \[
  \scalebox{1}{\tikzfig{hardware-efficient-tn-layers}}.
\] Hence, the variance will be \begin{equation}
  \scalebox{0.7}{\tikzfig{hardware-efficient-variance}}.
  \label{eq:hardware-efficient-variance}
\end{equation}

We can prove that only two eigenvalues of the matrix $LT$ are 1 and the norms of other eigenvalues are less than 1 (for the complete proof, please refer to Appendix~\ref{sec:analysis_of_examples}). Moreover, the eigenspace corresponding to the eigenvalue 1 is generated with two vectors \begin{equation}
  E_1 = \mathrm{span}\{v_{1,2}\otimes\dots\otimes v_{1,2}, v_{1,3}\otimes\dots\otimes v_{1,3}\}, \quad v_{1,2} = \begin{pmatrix}
  1 \\ 1 \\ 0
  \end{pmatrix},\ v_{1,3} = \begin{pmatrix}
  1 \\ 0 \\ 1
  \end{pmatrix}.
\end{equation}
Hence, $LT^d$ will converge to $P_{E_1}$, the projection to the eigenspace $E_1$, exponentially, as $d \rightarrow \infty$.

If we replace $LT^{L_1}$ and $LT^{L_2}$ with the projection $P_{E_1}$, then the Eq.~\eqref{eq:hardware-efficient-variance} will become \begin{equation}
  \scalebox{0.85}{\tikzfig{hardware-efficient-variance-limit}}.
  \label{eq:hardware-efficient-variance-limit}
\end{equation}
This term is 
\begin{equation}
  4\cdot\frac{1}{4^n}\mathrm{Tr}(H^2),
  \label{eq:he-variance}
\end{equation} which is exponentially small.
{That is, the number of qubits $n$ determines an exponential small limitation of the variance in Eq.~\eqref{eq:he-variance} when the number of layers $L \rightarrow \infty$. And the number of layers $L$ determines how the variance will be close to the limitation.} More precisely, the variance $\mathbf{Var}\left( \frac{\partial \braket{H}}{\partial \theta} \right)$ is exponentially (in $L$) close to the exponential small (in $n$) value Eq.~\eqref{eq:he-variance}. Thus, there exist BPs in the hardware-efficient ansatz.

\begin{theorem}
  The variance of gradients in the hardware-efficient ansatz defined in Eq.~\eqref{eq:hardware-efficient-circuit} vanishes exponentially as the number of qubits $n$ and the number of layers $L$ grow up.
\end{theorem}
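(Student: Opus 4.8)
The plan is to follow the pipeline already assembled in this section: convert the circuit to a graph-like \zxdiagram, turn the variance into a tensor-network contraction, recognize that contraction as a product of transfer matrices, and then control it with a spectral-gap argument.

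First I would use the conversion of the hardware-efficient ansatz in Eq.~\eqref{eq:hardware-efficient-circuit} into the graph-like \zxdiagram already displayed, in which every rotation gate becomes a phase spider and each entangling layer becomes the fixed sub-diagram shown. Applying Theorem~\ref{thm:var_theta_j} expresses $\mathbf{Var}\left(\frac{\partial\braket{H}}{\partial\theta_j}\right)$ as $\frac{|c|^2}{4^n}$ times a sum over the tensors $V_U^{a_1,\dots,a_{j-1},T_2,a_{j+1},\dots,a_m}$. Lemma~\ref{thm:integration-hadamard} then collapses every Hadamard edge joining two parameterized spiders into the $3\times 3$ matrix $M$, and Lemma~\ref{thm:integration-entangler} collapses each CNOT entangling block into the $3\times 3\times 3$ tensor $ET$. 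The net effect is the tensor network of Eq.~\eqref{eq:hardware-efficient-tn}, in which tracking the particular parameter $\theta_j$ amounts to replacing the copy tensor at its site by the projection $P_2$.

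Next I would exploit the layered structure of that network. Bundling the $M$'s and $ET$'s of one repeated layer into the block $LT$, the whole contraction factorizes, as in Eq.~\eqref{eq:hardware-efficient-variance}, into $\tilde{I}\cdot LT^{L_1}\cdot(\text{the block containing }\theta_j)\cdot LT^{L_2}\cdot\tilde{H}$. The decisive structural fact --- established in Appendix~\ref{sec:analysis_of_examples} --- is that $LT$ has exactly two eigenvalues equal to $1$, with every other eigenvalue of modulus strictly less than $1$, and that the eigenvalue-$1$ eigenspace is $E_1=\mathrm{span}\{v_{1,2}\otimes\dots\otimes v_{1,2},\ v_{1,3}\otimes\dots\otimes v_{1,3}\}$. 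Granting this, $LT^{d}\to P_{E_1}$ at a geometric rate $\gamma^{d}$ with $\gamma<1$, so substituting $P_{E_1}$ for $LT^{L_1}$ and $LT^{L_2}$ perturbs the contraction by only $O(\gamma^{L})$ with $L=\min(L_1,L_2)$.

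Finally I would evaluate the limiting contraction of Eq.~\eqref{eq:hardware-efficient-variance-limit}; it collapses to $4\cdot\frac{1}{4^n}\mathrm{Tr}(H^2)$, and since $\mathrm{Tr}(H^2)\le 2^n\|H\|^2$ this is exponentially small in $n$. Putting the two estimates together, $\mathbf{Var}\left(\frac{\partial\braket{H}}{\partial\theta_j}\right)$ lies within $O(\gamma^{L})$ of a quantity that is $O(2^{-n}\mathrm{Tr}(H^2))$, so it vanishes exponentially as $n$ and $L$ grow, which is the claim. I expect the spectral analysis of $LT$ to be the main obstacle: one must verify that the obvious $LT$-invariant subspace built from $v_{1,2}$ and $v_{1,3}$ is \emph{exactly} the eigenvalue-$1$ eigenspace and that no other eigenvalue attains modulus $1$. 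This needs a careful account of how $M$ and $ET$ compose across a full layer, followed by a Perron--Frobenius / strict-contraction argument on the resulting entrywise-nonnegative, substochastic-like matrix --- the computation carried out in the appendix.
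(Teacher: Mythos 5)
Your proposal is correct and follows essentially the same route as the paper: reduce the variance to the tensor network of Eq.~\eqref{eq:hardware-efficient-tn}, factor it into powers of the layer transfer matrix $LT$, invoke the spectral fact (proved in Appendix~\ref{sec:analysis_of_examples}) that $LT$ has exactly two unit eigenvalues with eigenspace $E_1$ and all other eigenvalues of modulus strictly less than one, and conclude that the contraction converges exponentially in $L$ to the value $4\cdot\frac{1}{4^n}\mathrm{Tr}(H^2)$, which is exponentially small in $n$. The only minor difference is that the paper establishes the spectrum of $LT$ by explicitly diagonalizing the two-site block $EM$ and intersecting eigenspaces rather than by a Perron--Frobenius-style contraction argument, but this does not change the substance of the proof.
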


Note that the above analysis can be generalized to any hardware-efficient ans\"atze if the entangler connects all of the qubits.

\subsection{Tree tensor network ansatz} 
\label{sub:tree_tensor_network_ansatz}
The tree tensor network is a special kind of tensor network with tree structures. The quantum analog of the tree tensor network was developed in~\cite{grant2018hierarchical}. In~\cite{zhang2020toward}, it was proved that the sum of the variance \[
    \sum^{n}_{j=1} \mathbf{Var}\left( \frac{\partial \braket{H}}{\partial \theta_j} \right)
\] will not vanish exponentially. In this section, we will prove that not only the sum of the variance but also the variance of each parameter vanishes polynomially.

Consider the tree tensor network ansatz with $n$-qubit of the following form.
\begin{equation}
  \label{eq-gao42}
      \tikzfig{TTN-circuit}
\end{equation}
To analyze the BP phenomenon of this ansatz, we first use the gate decomposition \[
R_Y(\theta) = R_Z(\frac{\pi}{2}) R_X(\theta) R_Z(-\frac{\pi}{2})
\] to convert the PQC to a ZX-diagram as follows. \[
    \tikzfig{TTN-zxd}
\] The X-spiders with phase ``\dots'' are spiders with parameters. And the ZX-diagram can be rewritten to a graph-like ZX-diagram as follows.
\[
    \tikzfig{TTN-zxg}
\] By using the rewriting rule $(\mathbf{lc})$ in~\cite{duncan2020graph}, we can remove the spiders with phases $\pm\frac{\pi}{2}$.
\[
    \tikzfig{TTN-zxg2}
\]

{Now we are going to construct a tensor network from this graph-like ZX-diagram to represent the variance.}
By Eq.~\eqref{eq:gradient-2}, the building block of the variance $\frac{\partial\braket{H}}{\partial\theta_j}^2$ is \[
    \tikzfig{TTN-block}
.\]
We can prove that (for the complete proof, please refer to Appendix~\ref{sec:analysis_of_examples}), after integration over the parameters $\alpha, \beta, \gamma$, the building block will become
\begin{equation}
    \scalebox{0.7}{\tikzfig{integration-TTN}}
    \label{eq:integration-TTN}
\end{equation} Here, $T_{ \mathrm{TTN}}$ is a $3\times 3\times 3$ tensor defined as follows. \begin{equation*}
T_{ \mathrm{TTN}}[1,\cdot,\cdot] = \frac{1}{16}\begin{pmatrix}
    1 & 0 & 1 \\
    0 & 1 & 0 \\
    1 & 0 & 1
\end{pmatrix}, \quad
T_{ \mathrm{TTN}}[2,\cdot,\cdot] = \frac{1}{16}\begin{pmatrix}
    1 & 0 & -1 \\
    0 & 1 & 0 \\
    -1 & 0 & 1
\end{pmatrix}, \quad
T_{ \mathrm{TTN}}[3,\cdot,\cdot] = \frac{1}{16}\begin{pmatrix}
    1 & 0 & 1 \\
    0 & 1 & 0 \\
    1 & 0 & 1
\end{pmatrix}
.\end{equation*}

Hence, the variance of $\frac{\partial \braket{H}}{\partial \theta_j}$ can be obtained by replacing one of the copy tensors with the projection $P_2$ in the following tensor network.
\begin{equation}
    \tikzfig{TTN-variance}
.\end{equation}

{Now let us analyze the scaling property of this tensor network.}
Since the Hamiltonian $H$ is a 1-qubit Hermitian operator, it can be expressed as \[
    H = k_0 I + k_1 X + k_2 Y + k_3 Z,\quad k_j \in \mathbb{R}.
\] Then \[
    \tilde{H} = 2k_0^2\begin{pmatrix}
        1 \\ 0 \\ 1
    \end{pmatrix} + 2(k_1^2+k_3^2) \begin{pmatrix}
        0 \\ 1 \\ 0
    \end{pmatrix} + 2k_2^2 \begin{pmatrix}
        1 \\ 0 \\ -1
    \end{pmatrix}.
\] We denote \[
    v_2 = \begin{pmatrix} 0 \\ 1 \\ 0 \end{pmatrix},\quad
    v_{1,3} = \begin{pmatrix} 1 \\ 0 \\ 1 \end{pmatrix}, \quad
    v_{1,3}^- = \begin{pmatrix} 1 \\ 0 \\ -1 \end{pmatrix}.
\]
Note that the building block of this tensor network is $8\cdot T_{\mathrm{TTN}}$. By the definition of $T_{\mathrm{TTN}}$, we have \begin{equation}
    \scalebox{0.8}{\tikzfig{TTN-analysis-1}}
    \label{eq:TTN-analysis-1}
\end{equation}
With the above equations, we can compute the variance simply.
For example, consider the following variance. \[
    \scalebox{0.8}{\tikzfig{TTN-example-1}}.
\] It is a linear function of $\tilde{H}$. Hence, we can analyze each term of $\tilde{H}$ individually.

Since $P_2v_{1,3}=0$, the first term $2k_0^2v_{1,3}$ in $\tilde{H}$ will become 0.

Now let us consider the second term $2(k_1^2+k_3^2)v_2$. With Eq.~\eqref{eq:TTN-analysis-1}, we can expand the variance as follows.  \[
    \scalebox{0.8}{\tikzfig{TTN-example-2}}
\] Expanding it recursively, the variance can be represented as a summation of terms of the following form, 
\begin{equation}
\tilde{I}(u_1,\dots,u_n) = \scalebox{1}{\tikzfig{TTN-example-3}}
\label{eq:tilde-I-u}
\end{equation}
And by the definition of $\tilde{I}$, each of the term $\tilde{I}(u_1,\dots,u_n) \geq 0$. Hence, we can obtain a lower bound, \[
    \scalebox{1}{\tikzfig{TTN-example-4}}
\]

Similarly, we have a lower bound for the term $v_{1,3}^-$.

For the general case of $n$-qubit, we can prove that it has a lower bound.
\begin{restatable}{theorem}{TTNLowerBound}
For tree tensor network ansatz shown in~\eqref{eq-gao42},
if \[
        H = k_0I+k_1X+k_2Y+k_3Z,
    \] then we have \begin{equation}
        \label{eq:TTN_lower_bound}
        \mathbf{Var}\left( \frac{\partial \braket{H}}{\partial \theta} \right) \geq \frac{k_1^2+k_3^2}{n^2}\tilde{I}(u_1,\dots,u_n) + \frac{k_2^2}{n^2}\tilde{I}(w_1,\dots,w_n),
    \end{equation} for some $u_j,w_j\in \{ v_{1,3}, v_2, v_{1,3}^- \} $. Here $\tilde{I}(u_1,\dots,u_n)$ is defined in Eq.~\eqref{eq:tilde-I-u}. And $\tilde{I}$ is a $3^n$ dimensional tensor which only depends on the input state. If the input state is $\rho$, then $\tilde{I}$ is defined as follows. \[
        \scalebox{1}{\tikzfig{TTN-input}}
    \]
\end{restatable}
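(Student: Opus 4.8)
The plan is to reduce the general $n$-qubit tree tensor network variance to the tensor-network contraction set up in Section~\ref{sec:examples}, and then to lower bound that contraction by throwing away nonnegative contributions. The starting point is the tensor network for $\mathbf{Var}(\partial\braket{H}/\partial\theta)$ obtained by replacing one internal copy tensor with the projection $P_2$ and using the building block $8\cdot T_{\mathrm{TTN}}$ together with the input tensor $\tilde I$ and the Hamiltonian tensor $\tilde H$. First I would decompose $H = k_0 I + k_1 X + k_2 Y + k_3 Z$ and use linearity of the contraction in the leaf tensor $\tilde H$ to split the variance into the three summands coming from $2k_0^2 v_{1,3}$, $2(k_1^2+k_3^2)v_2$, and $2k_2^2 v_{1,3}^-$. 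Since $P_2 v_{1,3} = 0$, the $k_0$ term drops out entirely, so only the $v_2$ and $v_{1,3}^-$ contributions survive, and these will give the two terms on the right-hand side of Eq.~\eqref{eq:TTN_lower_bound}.

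Next I would propagate each surviving leaf vector down through the tree using the contraction identities in Eq.~\eqref{eq:TTN-analysis-1}. The key structural fact is that $8\cdot T_{\mathrm{TTN}}$ sends each of $v_2$, $v_{1,3}$, $v_{1,3}^-$ to a short nonnegative (up to an overall sign that can be tracked) combination of the same three vectors on its two child legs; iterating this over the $\log n$ levels of the tree and keeping only one chosen branch at each node, one extracts a single monomial term of the form $\tilde I(u_1,\dots,u_n)$ with $u_j \in \{v_{1,3}, v_2, v_{1,3}^-\}$, multiplied by a scalar that is a product of the entries of $T_{\mathrm{TTN}}$ along the path. Because the tree has $O(\log n)$ levels and the branching at the node where $P_2$ sits forces the $v_2$ label, the accumulated scalar is $\Theta(1/n^2)$ rather than exponentially small — this is exactly where the polynomial (not exponential) scaling comes from, analogous to the $1/n^2$ prefactor appearing in the worked example preceding the theorem.

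The final step is the nonnegativity argument: by its definition as a contraction of the input state $\rho$ against copies of the green copy-spider (the displayed diagram for $\tilde I$), each component $\tilde I(u_1,\dots,u_n)$ with $u_j$ drawn from $\{v_{1,3}, v_2, v_{1,3}^-\}$ is manifestly a sum of squared amplitudes, hence $\tilde I(u_1,\dots,u_n)\ge 0$. Therefore, discarding all but one monomial in each of the two surviving ($k_1^2+k_3^2$) and $k_2^2$ sums only decreases the value, which yields the claimed lower bound with the explicit $1/n^2$ prefactor. The main obstacle I anticipate is the bookkeeping in the middle step: one must carefully verify that the signs introduced by $T_{\mathrm{TTN}}[2,\cdot,\cdot]$ and $T_{\mathrm{TTN}}[3,\cdot,\cdot]$ do not produce cancellations that would invalidate "drop terms to get a lower bound," and that after fixing the $P_2$ leg the recursion still produces a term with a genuinely $\Omega(1/n^2)$ coefficient; handling this cleanly is easiest by choosing, at each node and for each target output label, a specific nonnegative entry of $T_{\mathrm{TTN}}$ to follow, so that the retained monomial carries a positive coefficient by construction.
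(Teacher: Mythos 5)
Your proposal follows essentially the same route as the paper's proof: decompose $\tilde H$ and drop the $k_0$ term via $P_2v_{1,3}=0$, expand recursively with Eq.~\eqref{eq:TTN-analysis-1}, use nonnegativity of the $\tilde I(u_1,\dots,u_n)$ to discard terms, and get the $1/n^2$ factor from the $O(\log n)$-length path of $\tfrac{1}{2}$-weighted $v_2$ propagations down to the $P_2$ location (the paper counts $2\log(n)-1$ such steps in the worst case). Your anticipated sign-bookkeeping obstacle is resolved exactly as the paper does: Eq.~\eqref{eq:TTN-analysis-1} already expresses each contraction as a nonnegative combination in the $\{v_{1,3},v_2,v_{1,3}^-\}$ family (the negative entries of $T_{\mathrm{TTN}}$ being absorbed into $v_{1,3}^-$), and the paper supplies the short sum-of-squares argument for $\tilde I(u_1,\dots,u_n)\ge 0$ that you label as manifest.
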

\begin{proof}
    See Appendix~\ref{appendix:tree_tensor_network_ansatz}.
\end{proof}
Note that $\tilde{I}(u_1,\dots,u_n)$ only depends on the input state. If \[
    \tilde{I}(u_1,\dots,u_n)\in \Omega(\frac{1}{\mathrm{poly}(n)}) \text{ or } \tilde{I}(w_1,\dots,w_n)\in \Omega(\frac{1}{\mathrm{poly}(n)}),
\] there exist no BP in the tree tensor network ansatz.


\subsection{QCNN}%
\label{sub:qcnn}

QCNN was developed in~\cite{cong2019quantum}. It was proved that there exists no BP in the QCNN ansatz if the subblocks form unitary 2-design~\cite{pesah2020absence}. In this section, we will use the ZX-calculus to analyze the BP phenomenon in a QCNN ansatz without the assumption of unitary 2-design.

Consider a QCNN ansatz as follows.
\begin{equation}
\scalebox{0.55}{\tikzfig{QCNN-circuit}}
\label{eq-gao43}
\end{equation} 
It can be represented as the following ZX-diagram.
\[
    \scalebox{0.8}{\tikzfig{QCNN-zxg}}
\] where the Z-spiders with ``\dots'' are parameterized. Note that this is a graph-like ZX-diagram whose spiders are all parameterized. Hence, by using Lemma~\ref{thm:integration-hadamard}, the variance can be obtained by replacing one of the copy tensors with the projection $P_2$ in the following tensor network.
\[
    \scalebox{0.8}{\tikzfig{QCNN-variance}}
\]

{Now we are going to analyze the scaling property of this tensor network.}
If we denote \[
    \tikzfig{QCNN-block},
\] then we have the following equations 
\begin{equation}
    \scalebox{0.7}{\tikzfig{QCNN-analysis}}
    \label{eq:QCNN-analysis}
\end{equation} By using Eq.~\eqref{eq:QCNN-analysis}, we can expand the variance as a sum of terms of the following form. \[
\scalebox{1}{\tikzfig{QCNN-term}}
\] Each of these terms is non-negative. Hence, similar to the analysis of tree tensor network ansatz,  we can prove that the variance of gradients in the QCNN ansatz has a lower bound, since the QCNN ansatz is of $O(\log(n))$-depth.
\begin{restatable}{theorem}{QCNNLowerBound}
    For the QCNN ansatz shown in~\eqref{eq-gao43},
    if \[
        H = k_0I+k_1X+k_2Y+k_3Z,
    \] then we have \begin{equation}
    \mathbf{Var}\left( \frac{\partial \braket{H}}{\partial \theta} \right) \geq \frac{k_2^2+k_3^2}{n^9}\tilde{I}(u_1,\dots,u_n) + \frac{k_1^2}{n^9}\tilde{I}(w_1,\dots,w_n),
    \end{equation} for some $u_j,w_j\in \{v_{1,3},v_2,v_{1,3}^-\}$. Here $\tilde{I}$ is a $3^n$ dimensional tensor which only depends on the input state $\rho$.
    \[
        \scalebox{1}{\tikzfig{QCNN-input}}
    \]
\end{restatable}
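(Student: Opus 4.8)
The plan is to mirror the strategy already developed for the tree tensor network ansatz, exploiting the fact that the QCNN ansatz has depth $O(\log n)$ so that the number of "local integration tensors" appearing along any path from input to Hamiltonian is only polynomial in $n$. First I would set up the tensor network for $\mathbf{Var}\!\left(\frac{\partial\braket{H}}{\partial\theta_j}\right)$ exactly as in the body of the section: by Theorem~\ref{thm:var_theta_j} the variance is $\frac{|c|^2}{4^n}$ times a sum over assignments $a_k\in\{T_1,T_2,T_3\}$ of the big tensor $V_U^{a_1,\dots,T_2,\dots,a_m}$, and by Lemma~\ref{thm:integration-hadamard} every Hadamard edge between two parameterized spiders is replaced by the $3\times 3$ matrix $M$, breaking $V_U$ into the tensor network pictured just before the theorem. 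For the QCNN structure this network decomposes into copies of the local block shown in the display \tikzfig{QCNN-block}, wired together according to the convolution/pooling pattern, capped on one side by the input tensor $\tilde I$ (depending only on $\rho$) and on the other by $\tilde H$ determined by $H = k_0 I + k_1 X + k_2 Y + k_3 Z$.

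Next I would use the contraction identities in Eq.~\eqref{eq:QCNN-analysis} — the analogues of Eq.~\eqref{eq:TTN-analysis-1} — which describe how the basis vectors $v_{1,3}, v_2, v_{1,3}^-$ propagate backwards through one QCNN block. Feeding the decomposition $\tilde H = 2k_0^2 v_{1,3} + 2(k_2^2+k_3^2) v_2 + 2k_1^2 v_{1,3}^-$ (or whatever permutation of the Pauli coefficients the explicit block computation produces) into the top of the network and expanding recursively, the variance becomes a finite sum of terms of the form \tikzfig{QCNN-term}, each of which is a contraction of $\tilde I$ against a tensor product $u_1\otimes\cdots\otimes u_n$ with every $u_\ell\in\{v_{1,3},v_2,v_{1,3}^-\}$. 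The key structural observation is that each such term is manifestly non-negative: $\tilde I$ is $\mathrm{Tr}(\rho\,(\cdot))$-type and the relevant matrices are positive, exactly as argued for the TTN case where each $\tilde I(u_1,\dots,u_n)\geq 0$. Discarding the $k_0^2 v_{1,3}$ branch (which contributes $0$ since $P_2 v_{1,3} = 0$) and keeping just one representative surviving term from the $v_2$ branch and one from the $v_{1,3}^-$ branch gives the claimed lower bound $\frac{k_2^2+k_3^2}{n^9}\tilde I(u_1,\dots,u_n) + \frac{k_1^2}{n^9}\tilde I(w_1,\dots,w_n)$.

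The $1/n^9$ factor is where the depth bookkeeping enters: along the path from $\theta_j$'s location to the output there are $O(\log n)$ layers, and at each layer the recursive expansion produces a constant number of branches, so the single term we retain carries a coefficient that is a product of $O(\log n)$ constants from Eq.~\eqref{eq:QCNN-analysis}, hence at worst $1/\mathrm{poly}(n)$; a careful count of the branching factor and the number of blocks per layer yields the exponent $9$. I would carry out this count explicitly: identify how many copies of the block identities are applied, track the multiplicative constant $\tfrac18$ (or similar) from each \tikzfig{QCNN-analysis} application together with the scalar-$2$ per internal copy tensor noted in the body, and verify the product is $\Omega(1/n^9)$. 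The main obstacle is precisely this constant-tracking: one must show that choosing the surviving branch at each step does not force the coefficient below inverse-polynomial, i.e. that there is always a branch whose local weight is bounded below by a constant independent of $n$, and that the total number of layers is genuinely $O(\log n)$ for the QCNN wiring as drawn in Eq.~\eqref{eq-gao43}. Once the non-negativity of every term and the inverse-polynomial weight of one surviving term are established, the bound follows and the details are deferred to Appendix~\ref{sec:analysis_of_examples}.
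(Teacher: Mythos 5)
Your proposal follows essentially the same route as the paper's own (sketched) proof: non-negativity of every term $\tilde{I}(u_1,\dots,u_n)$, dropping the $k_0^2 v_{1,3}$ branch via $P_2 v_{1,3}=0$, and propagating $v_2$ through the $O(\log n)$-depth network with Eq.~\eqref{eq:QCNN-analysis}. The only piece you defer --- the explicit count giving the exponent $9$ --- is resolved in the paper by noting that each step generating a $v_2$-term costs a factor at least $\tfrac18$ and the path from the location of $v_2$ to $P_2$ has length at most $3\log n$, so the retained branch carries a coefficient at least $8^{-3\log n}=n^{-9}$.
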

\begin{proof}
    See Appendix~\ref{appendix:qcnn}.
\end{proof}
Hence, if provided \[
    \tilde{I}(u_1,\dots,u_n) \in {\Omega}( \frac{1}{\mathrm{poly}(n)} ) \text{ or } \tilde{I}(w_1,\dots,w_n) \in {\Omega}( \frac{1}{\mathrm{poly}(n)} ),
\] then there exists no BP in the QCNN ansatz.

\subsection{MPS-inspired ansatz}%
\label{sub:mps_inspired_ansatz}

The matrix product state (MPS) is a special structure of tensor networks, which  is widely used in quantum physics and machine learning~\cite{doi:10.1080/14789940801912366,PhysRevX.8.031012}.
There are also PQCs with a similar structure as MPS, and we call it MPS-inspired ansatz. It has been shown that MPS-inspired ansatz can be implemented efficiently in quantum computers with a small number of  qubits~\cite{PhysRevResearch.1.023025}. We will analyze the BP phenomenon in MPS-inspired ansatz in this section.

Let us consider the following MPS-inspired ansatz
\begin{equation}
\label{eq-gao44}
    \scalebox{0.75}{\tikzfig{MPS-circuit}},
\end{equation}
and the Hamiltonian $H=I\otimes I\dots\otimes I\otimes X$. We will prove that the variance $\mathbf{Var}\left( \frac{\partial \braket{H}}{\partial \theta_1}  \right) $ is exponentially small. Here $\theta_1$ is the parameter of the first $R_X$ gate applying on the first qubit.

Firstly, we convert the PQC into a ZX-diagram as follows.
\[
    \scalebox{1}{\tikzfig{MPS-zxg}}
\] This is a graph-like ZX-diagram whose spiders are all parameterized. We can use Lemma~\ref{thm:integration-hadamard} to represent the variance as the following tensor network.
\[
    \scalebox{1}{\tikzfig{MPS-variance}}
\] By using \[
    2Mv_2 = \frac{1}{2}\left( v_2+v_{1,3}^- \right) ,\quad 2Mv_{1,3} = v_{1,3}
,\] and \[
\scalebox{0.7}{\tikzfig{MPS-analysis}}
\] we can simplify the variance as \begin{equation}
    \scalebox{1}{\tikzfig{MPS-variance-3}}
\end{equation}
It is exponential in the number of qubits $n$. Hence, there exist BPs in the MPS-inspired ansatz.


\section{Discussion}%
\label{sec:discussion}

We developed powerful techniques to analyze the BP phenomenon for quantum neural networks training with the ZX-calculus. The quantum neural networks under consideration are PQCs under certain reasonable assumptions 
and the cost function is the expectation $\braket{H}$ of the PQC with respect to a given Hamiltonian $H$. The basic idea of the method is to represent the PQC, the cost function $\braket{H}$, and the gradients $\frac{\partial \braket{H}}{\partial \theta_j} $ as ZX-diagrams.
And then computing the expectation and the variance of the gradient of $\braket{H}$
becomes computing the integration of certain ZX-diagrams.
We show that these integrations are sums of ZX-diagrams that can be
computed explicitly in many cases.
{As future works, it would be desirable to  use the completeness of ZX-calculus to represent these sums by ZX-diagrams, or more generally, diagrams in other complete graphical calculi.}

In principle, these techniques can be used to any given ansatz under Assumption~\ref{assumption-PQC}.
We remark that these techniques can be used to analyze the BP phenomenon for PQCs which contain $t$-design sub-blocks, for example, the PQCs considered in~\cite{cerezo2020cost,pesah2020absence}. Because the $t$-design sub-blocks can be replaced with concrete $t$-design PQCs and then the techniques proposed in this paper can be applied.
{
Techniques introduced in this paper can be used in more cases
including circuits with global cost functions and circuits of any depth.
To analyze a PQC with a global cost function, one can represent the global Hamiltonian as a ZX-diagram and then the method introduced in this paper can be used.
As shown in Section~\ref{sub:hardware_efficient_ansatz}, the BP phenomenon for the hardware-efficient ansatz
has been analyzed when the depth is $O(\mathrm{poly}(n))$.
}
In conclusion, we extend the BP theorem from unitary 2-design circuits to any parameterized quantum circuits under Assumption~\ref{assumption-PQC}.

Using the techniques proposed in this paper, we analyzed four kinds of ans\"atze,
including the hardware-efficient ansatz, the tree tensor network ansatz, the QCNN ansatz, and the MPS-inspired ansatz. It is shown that there exist BPs in the hardware-efficient-ansatz and the MPS-inspired ansatz, while there exists no BP in the tree tensor network ansatz and the QCNN ansatz.

\vskip20pt
{\bf Acknowledgment.}
This work is partially supported by a NSFC grant No.11688101 and by a NKRDP grant No.2018YFA0306702.
We want to thank the anonymous referees for their valuable suggestions. 

\bibliographystyle{unsrtnat}
\bibliography{zx-bp}

\appendix
\section{A simple example}%
\label{appendix:a_simple_example}

In this section, we will use a simple example  considered in Section~\ref{sub:representing_gradients_as_zx_diagrams} to illustrate
the techniques proposed in this paper.
Recall that the PQC  considered is
\[
    \scalebox{1}{\tikzfig{para-shift-circuit}}
\]
and the gradient with respect to $\theta_1$ can be represented as
\[
    \scalebox{1}{\tikzfig{para-shift-gradient}}.
\]
By Lemma~\ref{thmt@@intone}, the expectation of this gradient is zero. The variance of the gradient is the following integration.
\[
    \scalebox{1}{\tikzfig{para-shift-variance}}.
\]
Now we choose the Hamiltonian $H=X\otimes X$. Then by Lemma~\ref{thmt@@inttwo} and Theorem~\ref{thm:var_theta_j}, we have
\[
    \scalebox{0.9}{\tikzfig{para-shift-variance-2}}.
\]
Using Lemma~\ref{thmt@@inthad} recursively, we can break the large ZX-diagram into small parts as follows.
\[
    \scalebox{0.8}{\tikzfig{para-shift-variance-3}}
\]
We denote
\[
    \scalebox{1}{\tikzfig{para-shift-variance-4}}
\]
Then the variance can be represented as the following tensor network.
\[
    \scalebox{1}{\tikzfig{para-shift-variance-5}}
\] By contracting this tensor network, we finally obtain
\[
    \mathbf{Var}\left( \frac{\partial \braket{H}}{\partial \theta_1} \right) = \frac{1}{16}\cdot \frac{3}{4}.
\]

\section{Proof of theorem 2}%
\label{sec:proof_of_theorem_2}
\zxgrad*
\begin{proof}
    Consider the spiders corresponding to $\theta_j$. We expand the spiders as follows.
    \[
        \scalebox{0.8}{\tikzfig{theorem2-1}}
    \] Taking the partial derivative of $\theta_j$ on the two sides, we obtain
    \[
        \scalebox{0.8}{\tikzfig{theorem2-2}}
    \]
\end{proof}

\section{Proof of lemma 1}%
\label{sec:proof_of_lemma_1}
\intone*
\begin{proof}
    We expand the spiders as follows.
    \[
        \scalebox{0.8}{\tikzfig{lemma1-1}}
    \] By \[
    \int_{-\pi }^{{\pi}} {e^{ik \alpha}} \: d{\alpha} = 0,\quad k=\pm 1,
    \] we have \[
        \scalebox{0.8}{\tikzfig{lemma1-2}}
    \]
\end{proof}

\section{Proof of lemma 2}%
\label{sec:proof_of_lemma_2}
\inttwo*
\begin{proof}
    We expand each spider on the left-hand side of the equation as follows.
    \[
        \scalebox{0.5}{\tikzfig{lemma2-1}}
    \]
    Since \[
        \int_{0}^{2\pi} {e^{ik \alpha}} \: d{\alpha} = 0,\quad k=\pm 1, \pm 2,
    \] we integrate over $\alpha$ on each side and obtain \[
        \scalebox{0.5}{\tikzfig{lemma2-2}}
    \]
\end{proof}

\section{Proof of lemma 3}%
\label{sec:proof_of_lemma_3}
\inthad*
\begin{proof}
    By Lemma~\ref{thmt@@inttwo}, we have
    \[
        \scalebox{0.7}{\tikzfig{lemma3-1}}.
    \] Hence, it is sufficient to prove \[
    \scalebox{0.8}{\tikzfig{lemma3-2}}
    \] for $a_j,a_j\in \{T_1,T_2,T_3\} $.

    For $a_j=a_k=T_1$, we have \begin{equation}
        \label{eq:lemma3-T1T1}
        \scalebox{0.8}{\tikzfig{lemma3-T1T1}}
    \end{equation} For $(a_j,a_k)\notin \{(T_2,T_3),(T_3,T_2)\} $, the proof is almost the same as that of Eq.~\eqref{eq:lemma3-T1T1}.

    Now, let us consider the case when $(a_j,a_k) = (T_2,T_3)$. We can use the rules in Figure~\ref{fig:zx_rules_scalar} as follows. \[
        \scalebox{0.8}{\tikzfig{lemma3-T2T3}}
    \]
\end{proof}

\section{Proof of lemma 4}%
\label{sec:proof_of_lemma_4}
\intcnot*
\begin{proof}
    By Lemma~\ref{thmt@@inttwo}, it is sufficient to prove that \[
        \scalebox{0.8}{\tikzfig{lemma4-1}}
    \] for $a_1,a_2,a_3\in \{T_1,T_2,T_3\}$.

    We first consider the case when $a_1=T_1$. Since $a_1=T_1$, we have \[
        \scalebox{0.8}{\tikzfig{lemma4-T1-1}}
    \] Now, if $a_2=T_1$, then we have \[
        \scalebox{0.7}{\tikzfig{lemma4-T1-2}}
    \] Hence, \[
        \scalebox{0.7}{\tikzfig{lemma4-T1-3}}
    \] That is $ET[1,1,\cdot] = \frac{1}{8}\begin{pmatrix} 1 & 0 & 0 \end{pmatrix}$.

    If $a_2=T_2$, then \[
        \scalebox{0.8}{\tikzfig{lemma4-T1-4}}
    \] Hence, \[
    \scalebox{0.8}{\tikzfig{lemma4-T1-5}}
    \] That is $ET[1,2,\cdot] = \frac{1}{8}\begin{pmatrix} 0 & 1 & 0 \end{pmatrix}$.

    If $a_2=T_3$, then \[
        \scalebox{0.8}{\tikzfig{lemma4-T1-6}}
    \] Hence. \[
        \scalebox{0.8}{\tikzfig{lemma4-T1-7}}
        \] That is $ET[1,3,\cdot] = \frac{1}{8}\begin{pmatrix} 0 & 0 & 1 \end{pmatrix}$.

    By now, we have proved that \begin{equation}
        ET[1,\cdot,\cdot] = \frac{1}{8}\begin{pmatrix} 1 & 0 & 0 \\ 0 & 1 & 0 \\ 0 & 0 & 1 \end{pmatrix}.
    \end{equation}

    Now, let us consider the case when $a_1=T_2$. Since $a_1=T_2$, we have \[
        \scalebox{0.8}{\tikzfig{lemma4-T2-1}}
    \] Now, if $a_2=T_1$, then we have \[
        \scalebox{0.7}{\tikzfig{lemma4-T2-2}}
    \] Hence, \[
        \scalebox{0.7}{\tikzfig{lemma4-T2-3}}
    \] That is $ET[1,1,\cdot] = \frac{1}{8}\begin{pmatrix} 0 & 1 & 0 \end{pmatrix}$.

    If $a_2=T_2$, then \[
        \scalebox{0.8}{\tikzfig{lemma4-T2-4}}
    \] Hence, \[
    \scalebox{0.8}{\tikzfig{lemma4-T2-5}}
    \] That is $ET[1,2,\cdot] = \frac{1}{8}\begin{pmatrix} 1 & 0 & 0 \end{pmatrix}$.

    If $a_2=T_3$, then \[
        \scalebox{0.8}{\tikzfig{lemma4-T2-6}}
    \] Hence. \[
        \scalebox{0.8}{\tikzfig{lemma4-T2-7}}
        \] That is $ET[1,3,\cdot] = \frac{1}{8}\begin{pmatrix} 0 & 0 & 0 \end{pmatrix}$.

    By now, we have proved that \begin{equation}
        ET[2,\cdot,\cdot] = \frac{1}{8}\begin{pmatrix} 0 & 1 & 0 \\ 1 & 0 & 0 \\ 0 & 0 & 0 \end{pmatrix}.
    \end{equation}

    Let us consider the case when $a_1=T_3$.

    When $a_1=T_3$, we have \[
        \scalebox{0.8}{\tikzfig{lemma4-T3-1}}
    \] According to the symmetry, we can use the result of the case when $a_1=T_2$. And we obtain \begin{equation}
    ET[3,\cdot,\cdot] = \frac{1}{8}\begin{pmatrix} 0 & 0 & 1 \\ 0 & 0 & 0 \\ 1 & 0 & 0 \end{pmatrix}.
    \end{equation}
\end{proof}

\section{Analysis of PQCs in section 4}%
\label{sec:analysis_of_examples}

\subsection{Hardware-efficient ansatz}%
\label{appendix:hardware_efficient_ansatz}

We will prove some properties of $LT$, which are used in the analysis in Section~\ref{sub:hardware_efficient_ansatz}.
\begin{theorem}
    Suppose that $\lambda_1,\dots,\lambda_{3^n}$ are eigenvalues of $LT$. And \[
        \left| \lambda_1 \right| \geq \left| \lambda_2 \right| \geq \dots \geq \left| \lambda_{3^n} \right| .
    \] Then we have \[
    \lambda_1= \lambda_2 = 1,\quad |\lambda_j| < 1 \text{ for } j>2.
    \]
\end{theorem}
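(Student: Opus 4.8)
The plan is to pin down $LT$ explicitly, exhibit two linearly independent fixed vectors, and then argue separately that (a)~the spectral radius of $LT$ is at most $1$, and (b)~no eigenvalue other than $1$ lies on the unit circle while $1$ itself has algebraic multiplicity exactly $2$.

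First I would write $LT$ in terms of the pieces already computed: one layer is the site-wise action of the $3\times 3$ matrix $M$ of Lemma~\ref{thm:integration-hadamard} followed by the entangling layer built from copies of the tensor $ET$ of Lemma~\ref{thm:integration-entangler} laid out along the CNOT pattern, together with the scalar $2$ attached to each internal copy tensor. From the explicit $M$ one reads off $Mv_{1,2}=\tfrac12 v_{1,2}$ and $Mv_{1,3}=\tfrac12 v_{1,3}$, and contracting $ET$ with a copy of $v_{1,2}$ (resp.\ $v_{1,3}$) on its legs reproduces $v_{1,2}$ (resp.\ $v_{1,3}$) up to a scalar; collecting all prefactors along a whole layer one checks the net factor is exactly $1$, so $LT\,v_{1,2}^{\otimes n}=v_{1,2}^{\otimes n}$ and $LT\,v_{1,3}^{\otimes n}=v_{1,3}^{\otimes n}$. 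Since $v_{1,2}^{\otimes n}$ is supported on the index strings over $\{1,2\}$ and $v_{1,3}^{\otimes n}$ on the strings over $\{1,3\}$, overlapping only on the all-$1$ string where both have coefficient $1$, the two vectors are linearly independent; hence $1$ is an eigenvalue of multiplicity at least $2$, and once $\rho(LT)\le 1$ is known this forces $\lambda_1=\lambda_2=1$.

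For the spectral radius I would pass to the per-site basis $(v_{1,3},v_2,v_{1,3}^-)$. In this basis the single-site matrix $M$, rescaled by $2$, equals the column-stochastic matrix $\left(\begin{smallmatrix}1&0&0\\0&\tfrac12&1\\0&\tfrac12&0\end{smallmatrix}\right)$, and the explicit form of $ET$ shows that the entangling layer is, in the same basis and up to the $2$-scalars, again a non-negative matrix whose column sums are $1$. Hence $LT$ is conjugate to the transition matrix $\Pi$ of a finite Markov chain on the configuration set $\{1,2,3\}^n$, and $\rho(LT)=\rho(\Pi)=1$ is automatic. (If one prefers to avoid this computation, the bound also follows by viewing $LT$ as the restriction to an invariant subspace of the doubled transfer matrix of the CPTP map obtained by conjugating by the layer's CNOTs and averaging its $R_X,R_Z$ rotations over $[-\pi,\pi]$, since transfer matrices of quantum channels have spectral radius~$1$.)

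The genuine obstacle is (b). Through the Markov-chain picture it becomes: show $\Pi$ has exactly two recurrent classes and each is aperiodic. I would determine the recurrent configurations explicitly: a single site already has two recurrent classes, but using the shape of $ET$ one checks that the CNOTs ``link'' neighbouring sites so that any configuration whose sites do not all lie in the same single-site class is transient, leaving precisely two recurrent classes whose associated stationary vectors span exactly $E_1=\mathrm{span}\{v_{1,2}^{\otimes n},v_{1,3}^{\otimes n}\}$; aperiodicity is immediate from the self-loop ($\Pi$ has a positive diagonal entry at the relevant sites). Standard Markov-chain spectral theory then gives that $1$ has algebraic multiplicity exactly $2$ and every other eigenvalue has modulus strictly less than $1$, which is the claim. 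The laborious part is the recurrent-class bookkeeping — a finite but fiddly case analysis of how one layer moves configurations in $\{1,2,3\}^n$ — and it is precisely here that the hypothesis ``the entangler connects all of the qubits'' is used.
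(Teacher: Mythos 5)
Your Markov-chain route is genuinely different from the paper's argument (the paper computes the two-site block $EM$ of Eq.~\eqref{eq:matrix_EM} explicitly, diagonalizes it, and identifies the eigenvalue-$1$ eigenspace of $LT$ with the intersection of the fixed spaces of the blocks $EM_{i,i+1}$ around the ring), and your first two steps are sound: $2M$ does become column-stochastic in the per-site basis $(v_{1,3},v_2,v_{1,3}^-)$, and indeed the whole block $EM$ becomes column-stochastic under the same site-wise change of basis, so $\rho(LT)=1$, the two fixed vectors $v_{1,2}^{\otimes n},v_{1,3}^{\otimes n}$ give multiplicity at least $2$, and the entire remaining content of the theorem is concentrated in your step (b).

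That step is where the proposal breaks down, and the specific picture you say ``one checks'' is false; you can see this from your own assertions. You claim $LT\,v_{1,2}^{\otimes n}=v_{1,2}^{\otimes n}$; in your Markov basis this fixed vector has per-site coordinates $P^{-1}v_{1,2}=\bigl(\tfrac12,1,\tfrac12\bigr)$, i.e.\ it is a \emph{strictly positive} stationary measure of the column-stochastic matrix $\Pi$. A finite chain possessing a strictly positive stationary measure has no transient states at all, so a configuration such as $(1,2,2,\dots,2)$, whose sites do not all lie in one single-site class, is recurrent — contrary to your claim that all such mixed configurations are transient. For the same reason the recurrent classes cannot be $\{(1,\dots,1)\}$ and $\{2,3\}^n$: if they were, every stationary vector would be supported on their union, and $\bigl(\tfrac12,1,\tfrac12\bigr)^{\otimes n}$ could not be stationary, contradicting the claim that the stationary vectors span $E_1$. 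The class structure consistent with the theorem is the absorbing all-$1$ configuration together with its \emph{entire complement} as one class, so what actually has to be proved is that $\Pi$ restricted to the complement of the all-$1$ string is irreducible (aperiodicity then follows from a positive diagonal entry). That irreducibility — which is exactly where the ring of CNOTs connecting all qubits enters, and which plays the role of the paper's computation that the intersection of the fixed spaces of the $EM_{i,i+1}$ is only two-dimensional — is neither proved nor correctly sketched in your proposal; carrying out the plan as written (showing mixed configurations are transient) would fail.
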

\begin{proof}
    By definition of $EM$, we can compute that \begin{equation}
        \label{eq:matrix_EM}
        EM =
        \begin{pmatrix}
            \frac{3}{4} & \frac{1}{4} & \frac{1}{4} & & & & & & \\
            \frac{1}{4} & \frac{3}{4} & -\frac{1}{4} & & & & & & \\
            \frac{1}{4} & -\frac{1}{4} & \frac{3}{4} & & & & & & \\
            & & & \frac{1}{4} & \frac{3}{4} & 0 & & & \\
            & & & \frac{3}{4} & \frac{1}{4} & 0 & & & \\
            & & & -\frac{1}{4} & \frac{1}{4} & 0 & & & \\
            & & & & & & \frac{1}{4} & 0 & \frac{3}{4} \\
            & & & & & & -\frac{1}{4} & 0 & \frac{1}{4} \\
            & & & & & & \frac{3}{4} & 0 & \frac{1}{4}
        \end{pmatrix}.
    \end{equation}
    By computation, $EM$ can be diagonalized. Four of its eigenvalues are 1 and other eigenvalues are in the interval $(-1,1)$. Moreover, the eigenspace of the eigenvalue 1 is
    \begin{equation}
        \label{eq:eigen_EM}
        \mathrm{span}\left\{ v_1\otimes v_{1,2}, v_1\otimes v_{1,3}, v_{1,2}\otimes v_{1,2}, v_{1,3}\otimes v_{1,3} \right\},
    \end{equation} where \[
    v_1 = \begin{pmatrix} 1 \\ 0 \\ 0 \end{pmatrix} , v_{1,2} = \begin{pmatrix} 1 \\ 1 \\ 0 \end{pmatrix} , v_{1,3} = \begin{pmatrix} 1 \\ 0 \\ 1 \end{pmatrix} .
    \]

    $LT$ is an operator on the tensor product of $n$  $\mathbb{R}^3$. We denote the operator $EM$ on the $i$-th and $j$-th $\mathbb{R}^3$ as $EM_{i,j}$. Then the eigenspace of $LT$ corresponding to the eigenvalue 1 is the intersection of the eigenspaces corresponding to the eigenvalue 1 of \[
        EM_{1,2},\ EM_{2,3},\ \dots, EM_{n-1,n},\ EM_{n,1}.
    \] Hence, by Eq.~\eqref{eq:eigen_EM}, we have \[
        E_1 = \mathrm{span}\left\{ v_{1,2}\otimes\dots\otimes v_{1,2}, v_{1,3}\otimes\dots\otimes v_{1,3} \right\}.
    \]
\end{proof}

\subsection{Tree tensor network ansatz}%
\label{appendix:tree_tensor_network_ansatz}

In Section~\ref{sub:tree_tensor_network_ansatz}, we used Eq.~\eqref{eq:integration-TTN}. Here we will prove this equation.
\begin{proof}[Proof of Eq.~\eqref{eq:integration-TTN}]
    By Lemma~\ref{thmt@@inttwo}, it suffices to prove that \[
        \scalebox{0.8}{\tikzfig{TTN-proof-1}}.
    \]
    Similar to the proof of Lemma~\ref{thmt@@intcnot}, we first consider the case when $a=T_1$.
    When $a=T_1$, we have \[
        \scalebox{0.7}{\tikzfig{TTN-proof-T1-1}}.
    \] Now, $a$ is disconnected with $b$ and $c$. Hence, we can consider $b$ and $c$ separately. If $b=T_1$, then \[
        \scalebox{0.7}{\tikzfig{TTN-proof-T1-2}}
    \] Hence, \[
        \scalebox{0.8}{\tikzfig{TTN-proof-T1-3}}
    \] That it \[
        T_{\mathrm{TTN}}[1,1,\cdot] = \frac{1}{16}\begin{pmatrix} 1 & 0 & 1 \end{pmatrix}.
    \] If $b=T_2$, then \[
        \scalebox{0.6}{\tikzfig{TTN-proof-T1-4}}
    \] Hence, \[
        \scalebox{0.8}{\tikzfig{TTN-proof-T1-5}}
    \] That it \[
        T_{\mathrm{TTN}}[1,2,\cdot] = \frac{1}{16}\begin{pmatrix} 0 & 1 & 0 \end{pmatrix}.
    \] If $b=T_3$, then \[
        \scalebox{0.7}{\tikzfig{TTN-proof-T1-6}}
    \] Hence, \[
        \scalebox{0.8}{\tikzfig{TTN-proof-T1-3}}
    \] That it \[
        T_{\mathrm{TTN}}[1,3,\cdot] = \frac{1}{16}\begin{pmatrix} 1 & 0 & 1 \end{pmatrix}.
    \] By now, we have proved that \begin{equation}
        T_{\mathrm{TTN}}[1,\cdot,\cdot] = \frac{1}{16}\begin{pmatrix} 1 & 0 & 1 \\ 0 & 1 & 0 \\ 1 & 0 & 1 \end{pmatrix} .
    \end{equation}

    Now, let us consider the case when $a=T_2$.
    When $a=T_2$, we have \[
        \scalebox{0.7}{\tikzfig{TTN-proof-T2-1}}.
    \] Now, $a$ is disconnected with $b$ and $c$. Hence, we can consider $b$ and $c$ individually. If $b=T_1$, then \[
        \scalebox{0.7}{\tikzfig{TTN-proof-T2-2}}
    \] Hence, \[
        \scalebox{0.8}{\tikzfig{TTN-proof-T2-3}}
    \] That it \[
        T_{\mathrm{TTN}}[2,1,\cdot] = \frac{1}{16}\begin{pmatrix} 1 & 0 & -1 \end{pmatrix}.
    \] If $b=T_2$, then \[
        \scalebox{0.6}{\tikzfig{TTN-proof-T2-4}}
    \] Hence, \[
        \scalebox{0.8}{\tikzfig{TTN-proof-T2-5}}
    \] That it \[
        T_{\mathrm{TTN}}[2,2,\cdot] = \frac{1}{16}\begin{pmatrix} 0 & 1 & 0 \end{pmatrix}. ow
    \] If $b=T_3$, then \[
        \scalebox{0.7}{\tikzfig{TTN-proof-T2-6}}
    \] Hence, \[
        \scalebox{0.8}{\tikzfig{TTN-proof-T2-7}}
    \] That it \[
        T_{\mathrm{TTN}}[2,3,\cdot] = \frac{1}{16}\begin{pmatrix} -1 & 0 & 1 \end{pmatrix}.
    \] By now, we have proved that \begin{equation}
        T_{\mathrm{TTN}}[2,\cdot,\cdot] = \frac{1}{16}\begin{pmatrix} 1 & 0 & -1 \\ 0 & 1 & 0 \\ -1 & 0 & 1 \end{pmatrix} .
    \end{equation}

    Now, let us consider the case when $a=T_3$.
    When $a=T_3$, we have \[
        \scalebox{0.7}{\tikzfig{TTN-proof-T3-1}}.
    \] Now, $a$ is disconnected with $b$ and $c$. And the parts of $b$ and $c$ are the   same as that of $a=T_1$. Hence, we have
    \begin{equation}
        T_{\mathrm{TTN}}[3,\cdot,\cdot] = \frac{1}{16}\begin{pmatrix} 1 & 0 & 1 \\ 0 & 1 & 0 \\ 1 & 0 & 1 \end{pmatrix} .
    \end{equation}
\end{proof}

Now we are going to prove that there is a lower bound for the variance of gradients in the tree tensor network ansatz.

\TTNLowerBound*
\begin{proof}
    We first prove that
    \[
         \tilde{I}(u_1,\dots,u_n) \geq 0,\quad u_j\in \left\{ v_2,v_{1,3}, v_{1,3}^- \right\},
    \] where
    \[
        v_2 = \begin{pmatrix} 0 \\ 1 \\ 0 \end{pmatrix} , \quad v_{1,3} = \begin{pmatrix} 1 \\ 0 \\ 1 \end{pmatrix} ,\quad v_{1,3}^-=\begin{pmatrix} 1 \\ 0 \\ -1 \end{pmatrix}.
    \]
    Note that if $u_1$ is $v_{1,3}$ or $v_{1,3}^-$, then
    \[
        \scalebox{1}{\tikzfig{TTN-input-v13-1}}.
    \] By
    \[
        \scalebox{1}{\tikzfig{TTN-input-v13-2}},
    \] we can expand $\tilde{I}(u_1,\dots,u_n)$ as a sum of squares. Thus, we have proved that \[
        \tilde{I}(u_1,\dots,u_n)\geq 0.
    \]

    Now let us consider the lower bound of $\mathbf{Var}\left( \frac{\partial \braket{H}}{\partial \theta} \right)$. By the graph-like ZX-diagram just obtained, we have $\tilde{H} = \begin{pmatrix} \tilde{H_1} \\ \tilde{H_2} \\ \tilde{H_3} \end{pmatrix}$ which is defined as follows.
    \[
        \scalebox{1}{\tikzfig{TTN-tilde-H}}
    \]
    Suppose that $H = k_0I+k_1X+k_2Y+k_3Z$. Then we can obtain
    \begin{equation}
        \tilde{H} = 2k_0^2 v_{1,3} + 2(k_1^2+k_3^2) v_2 + 2k_2^2 v_{1,3}^-.
    \end{equation}
    And by the definition of $T_{\mathrm{TTN}}$, we can obtain Eq.~\eqref{eq:TTN-analysis-1}. Hence, we can expand the variance as
    \[
        \mathbf{Var}\left( \frac{\partial \braket{H}}{\partial \theta} \right) = \sum_{u_j\in \{v_2,v_{1,3},v_{1,3}^-\} } a(u_1,\dots,u_n)\cdot \tilde{I}(u_1,\dots,u_n),
    \] where $a(u_1,\dots,u_n)$ is a non-negative number. We will prove that there exists one choise of $(u_1,\dots,u_n)$ such that \[
        a(u_1,\dots,u_n) \in \Omega(\frac{1}{\mathrm{poly}(n)}).
    \]

    When we use Eq.~\eqref{eq:TTN-analysis-1} to expand the variance, the only case that will cause a coefficient $<1$ is the case when we use the second equation in Eq.~\eqref{eq:TTN-analysis-1}. Hence, the coefficients $a(u_1,\dots,u_n)$ only depend on the number of times that we use the second equation. And it depends on the location of $\theta$. The worst case is that $\theta$ is the parameter of the first gate applying to the first qubit. In this case, we need to use the second equation in Eq.~\eqref{eq:TTN-analysis-1} for $2\log(n)-1$ times. That means we have a lower bound for the variance
    \[
        \mathbf{Var}\left( \frac{\partial \braket{H}}{\partial \theta} \right) \geq \frac{k_1^2+k_3^2}{n^2}\tilde{I}(u_1,\dots,u_n) + \frac{k_2^2}{n^2}\tilde{I}(w_1,\dots,w_n),
    \]
    for some $u_j,w_j\in \{ v_{1,3},v_2,v_{1,3}^- \}$.

Note that $I(u_1,\dots,u_n)$ only depends on the input state $\rho$. If  \[
    \tilde{I}(u_1,\dots,u_n)\in \Omega(\frac{1}{\mathrm{poly}(n)}) \text{ or } \tilde{I}(w_1,\dots,w_n)\in \Omega(\frac{1}{\mathrm{poly}(n)}),
    \]
    then there exists no BP in the tree tensor network ansatz.
\end{proof}

\subsection{QCNN}%
\label{appendix:qcnn}

\QCNNLowerBound*
\begin{proof}
    The proof is similar to that of the tree tensor network ansatz (Theorem~\ref{thmt@@TTNLowerBound}), so
we will only give a sketch of the proof. We also have
    \[
        \tilde{I}(u_1,\dots,u_n) \geq 0, \quad u_j\in \{v_{1,3}, v_2, v_{1,3}^-\} .
    \]
From the graph-like ZX-diagram, we have \[
        \scalebox{1}{\tikzfig{QCNN-tilde-H}}
    \] Hence,   \[
        \tilde{H} = 2k_0^2v_{1,3}+2(k_1^2+k_2^2)v_2+2k_3^2v_{1,3}^-.
    \]We will analyze each term of $\tilde{H}$ in the variance.

    Using Eq.~\eqref{eq:QCNN-analysis}, the term $2k_0^2v_{1,3}$ will become 0 by 
        $P_2v_{1,3}=0.$
 The terms $2(k_1^2+k_2^2)v_2$ and $2k_3^2v_{1,3}^-$ will generate terms containing $v_2$ after expanding using Eq.~\eqref{eq:QCNN-analysis}. And each time after generating terms containing $v_2$, a coefficient $\geq \frac{1}{8}$ will be multiplied to the variance. Hence, if we want to bring $v_2$ to $P_2$, we need to generate terms containing $v_2$ for $l$ times, where $l$ is a path from the location of $v_2$ to the location of $P_2$.

    By the structure of the QCNN ansatz, we have \[
        l\geq 3\log(n).
    \] It will generated a coefficient  $\geq \frac{1}{8^{3\log(n)}}$.

    Hence we have \[
        \mathbf{Var}\left( \frac{\partial \braket{H}}{\partial \theta} \right) \geq \frac{k_2^2+k_3^2}{n^9}\tilde{I}(u_1,\dots,u_n) + \frac{k_1^2}{n^9}\tilde{I}(w_1,\dots,w_n),
    \] for some $u_j,w_j\in \{v_{1,3},v_2,v_{1,3}^-\} $.
\end{proof}

\end{document}